\numberwithin{equation}{section}
\begin{document}

\newtheorem{theorem}{Theorem}[section]
\newtheorem{proposition}[theorem]{Proposition}
\newtheorem{lemma}[theorem]{Lemma}
\newtheorem{corollary}[theorem]{Corollary}

\theoremstyle{definition}
\newtheorem{definition}[theorem]{Definition}
\newtheorem{example}[theorem]{Example}
\newtheorem{remark}[theorem]{Remark}
\newtheorem{assumptions}[theorem]{Assumptions}

\newcommand{\ds}{\displaystyle} \newcommand{\nl}{\newline}
\newcommand{\eps}{\varepsilon}
\newcommand{\bE}{\mathbb{E}}
\newcommand{\cA}{\mathcal{A}}
\newcommand{\cB}{\mathcal{B}}
\newcommand{\cC}{\mathcal{C}}
\newcommand{\cL}{\mathcal{L}}
\newcommand{\fL}{\mathfrak{L}}
\newcommand{\cS}{\mathcal{S}}
\newcommand{\cO}{\mathcal{O}}
\newcommand{\cQ}{\mathcal{Q}}
\newcommand{\cH}{\mathcal{H}}
\newcommand{\cF}{\mathcal{F}}
\newcommand{\cM}{\mathcal{M}}
\newcommand{\cI}{\mathcal{I}}
\newcommand{\cD}{\mathcal{D}}
\newcommand{\cG}{\mathcal{G}}
\newcommand{\cT}{\mathcal{T}}
\newcommand{\cP}{\mathcal{P}}
\newcommand{\bP}{\mathbb{P}}
\newcommand{\bT}{\mathbb{T}}
\newcommand{\bD}{\mathbb{D}}
\newcommand{\bQ}{\mathbb{Q}}
\newcommand{\bC}{\mathbb{C}}
\newcommand{\bN}{\mathbb{N}}
\newcommand{\bR}{\mathbb{R}}
\newcommand{\II}{\mbox{II}}
\newcommand{\I}{\mbox{I}}
\newcommand{\QV}{\mbox{QV}}

\title{Functional It\^o Calculus, Path-dependence and the Computation of Greeks}

\author{Samy Jazaerli \and Yuri F. Saporito}

\author{Samy Jazaerli \thanks{Centre de Math\'ematiques Appliqu\'ees (CMAP), \'Ecole Polytechnique, Palaiseau, France, \texttt{samy.jazaerli@polytechnique.edu}} \and Yuri F. Saporito \thanks{Escola de Matem\'atica Aplicada (EMAp), Funda\c{c}\~ao Get\'ulio Vargas (FGV), Rio de Janeiro, Brazil, {\em yuri.saporito@fgv.br}.}}

\maketitle

\begin{abstract}

Dupire's functional It\^o calculus provides an alternative approach to the classical Malliavin calculus for the computation of sensitivities, also called Greeks, of path-dependent derivatives prices. In this paper, we introduce a measure of path-dependence of functionals within the functional It\^o calculus framework. Namely, we consider the Lie bracket of the space and time functional derivatives, which we use to classify functionals accordingly to their degree of path-dependence. We then revisit the problem of efficient numerical computation of Greeks for path-dependent derivatives using integration by parts techniques. Special attention is paid to path-dependent functionals with zero Lie bracket, called locally weakly path-dependent functionals in our classification. Hence, we derive the weighted-expectation formulas for their Greeks. In the more general case of fully path-dependent functionals, we show that, equipped with the functional It\^o calculus, we are able to analyze the effect of the Lie bracket on the computation of Greeks. Moreover, we are also able to consider the more general dynamics of path-dependent volatility. These were not achieved using Malliavin calculus.

\end{abstract}

\section{Introduction}

The theory of functional It\^o calculus, introduced in Dupire's seminal paper \cite{fito_dupire}, extends It\^o's stochastic calculus to functionals of the current history of a given process, and hence provides an excellent tool to study path-dependence. Further work extending this theory and its applications can be found in the partial list \cite{rama_cont_fito_mart, rama_cont_fito_formula, rama_cont_fito_change_variable, fito_zhang_1, fito_touzi_ppde1, fito_touzi_ppde2, fito_bsde_peng}.

We intuitively understand path-dependence of a functional as a measurement of its changes when the history of the underlying path varies. Here we propose a measure of path-dependence given by the Lie bracket of the space and time functional derivatives. Roughly, this is an instantaneous measure of path-dependence, since we consider only path perturbations at the current time. We then classify functionals based on this measure. Moreover, we analyze the relation of what we called \textit{locally weakly path-dependent} functionals and the Monte Carlo computation of Greeks in path-dependent volatility models, cf. \cite{malliavin_greeks1}.

Malliavin calculus was successfully applied to derive these Monte Carlo procedures to compute Greeks of path-dependent derivatives in local volatility models, see for example \cite{malliavin_greeks1, malliavin_greeks2, malltelm05, gobet_malliavin_barrier_lookback, gobet_revisiting_greeks, nualart_malliavin_book}. However, the theory presented here allows us to extend these Monte Carlo procedures to a wider class of path-dependent derivatives provided that the path-dependence is not too severe. This will be made precise in Section \ref{sec:weakly_path_depend_section}. We will also see that the functional It\^o calculus can be used to derive the same weighted-expectation formulas shown in \cite{malliavin_greeks1}.

Furthermore, unlike the Malliavin calculus approach, we are also able to provide a formula for the Delta of functionals with more severe path-dependence, here called \textit{strongly path-dependent}. In its current form, this formula enhances the understanding of the weights for different cases of path-dependence, although it is not as computationally appealing as the ones derived for locally weakly path-dependent functionals. It shows however the impact that the Lie bracket has on the Delta of a derivative contract. Additionally, the functional It\^o calculus allows us to consider the more general path-dependent volatility models, see \cite{foschi2008path}, \cite{guyon_path_vol} and \cite{complete_sv_rogers}, for example.

Our main contribution is the introduction of a measure of path-dependence and the application of such measure to the computation of Greeks for path-dependent derivatives.

The paper is organized as follows. In Section \ref{sec:fito_primer}, we provide some background on functional It\^o calculus. Section \ref{sec:weakly_path_depend_section} introduces the measure of path-dependence and a classification of functionals accordingly to this measure. Finally, we present applications of this measure of path-dependence to the computation of Greeks in Section \ref{sec:greeks_sec}. Two numerical examples, related to Asian options and quadratic variation contracts, are discussed.

\section{A Primer on Functional It\^o Calculus}\label{sec:fito_primer}

In this section we will present some definitions and results of the functional It\^o calculus that will be necessary in Sections \ref{sec:weakly_path_depend_section}  and \ref{sec:greeks_sec}.

The space of $\bR$-valued c\`adl\`ag paths in $[0,t]$ will be denoted by $\Lambda_t$. We also fix a time horizon $T > 0$. The \textit{space of paths} is then defined as
$$\Lambda = \bigcup_{t \in [0,T]} \Lambda_t.$$

\textit{A very important remark on the notation:} as in \cite{fito_dupire}, we will denote elements of $\Lambda$ by upper case letters and often the final time of its domain will be subscripted, e.g. $Y \in \Lambda_t \subset \Lambda$ will be denoted by $Y_t$. Note that, for any $Y \in \Lambda$, there exists only one $t$ such that $Y \in \Lambda_t$. The value of $Y_t$ at a specific time will be denoted by lower case letter: $y_s = Y_t(s)$, for any $s \leq t$. Moreover, if a path $Y_t$ is fixed, the path $Y_s$, for $s \leq t$, will denote the restriction of the path $Y_t$ to the interval $[0,s]$.

The following important path operations are always defined in $\Lambda$. For $Y_t \in \Lambda$ and $t \leq s \leq T$, the \textit{flat extension} of $Y_t$ up to time $s \geq t$ is defined as
$$Y_{t,s-t}(u) = \left\{
\begin{array}{ll}
  y_u,  &\mbox{ if } \quad 0 \leq u \leq t, \\
  y_t,  &\mbox{ if } \quad t \leq u \leq s,
\end{array}
\right.$$
see Figure \ref{fig:flat}. For $h \in \bR$, the \textit{bumped path} $Y_t^h$, shown in Figure \ref{fig:bump}, is defined by
$$Y_t^h(u) = \left\{
\begin{array}{ll}
  y_u,      &\mbox{ if } \quad 0 \leq u < t, \\
  y_t + h,  &\mbox{ if } \quad u = t.
\end{array}
\right.$$
\begin{figure}[h!]
  \begin{minipage}[b]{0.5\linewidth}
    \centering
    \includegraphics[width=0.8\linewidth]{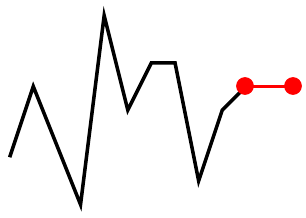}
    \caption{Flat extension of a path.}
    \label{fig:flat}
  \end{minipage}
  \hspace{0.5cm}
  \begin{minipage}[b]{0.5\linewidth}
    \centering
    \includegraphics[width=0.8\linewidth]{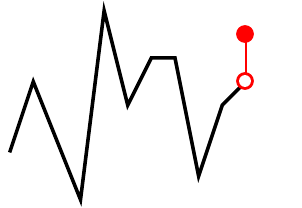}
    \caption{Bumped path.}
    \label{fig:bump}
  \end{minipage}
\end{figure}

For any $Y_t, Z_s \in \Lambda$, where it is assumed without loss of generality that $s \geq t$, we define the following metric in $\Lambda$,
$$d_{\Lambda}(Y_t,Z_s) = \| Y_{t,s-t} - Z_s\|_{\infty} + |s -t|,$$
where
$$\|Y_t\|_{\infty} = \sup_{u \in [0,t]} |y_u|.$$

A \textit{functional} is any function $f:\Lambda \longrightarrow \bR$ and it is said to be $\Lambda$-continuous if it is continuous with respect to the metric $d_{\Lambda}$.

Moreover, for a functional $f$ and a path $Y_t$ with $t < T$, if the following limit exists, the \textit{time functional derivative} of $f$ at $Y_t$ is defined as
\begin{align}
\Delta_t f(Y_t) = \lim_{\delta t \to 0^+} \frac{f(Y_{t,\delta t}) - f(Y_t)}{\delta t}. \label{eq:time_deriv}
\end{align}
The \textit{space functional derivative} of $f$ at $Y_t$ is defined as
\begin{align}
\Delta_x f(Y_t) = \lim_{h \to 0} \frac{f(Y_t^h) - f(Y_t)}{h}, \label{eq:space_deriv}
\end{align}
when this limit exists, and for this derivative it is allowed $t = T$.

Additionally, we say a functional $f$ is \textit{boundedness-preserving} if, for every compact set $K \subset \bR$, there exists a constant $C$ such that $|f(Y_t)| \leq C$, for every path $Y_t$ satisfying $Y_t([0,t]) = \{y \in \bR \ ; \ Y_t(s) = y \mbox{ for some } s \in [0,t]\} \subset K$.

Finally, a functional $f: \Lambda \longrightarrow \bR$ is said to be in $\bC^{1,2}$ if it is $\Lambda$-continuous, boundedness-preserving and it has $\Lambda$-continuous, boundedness-preserving derivatives $\Delta_t f$, $\Delta_x f$ and $\Delta_{xx} f$. With obvious definition, we also use the notation $\bC^{i,j}$, where $\bC = \bC^{0,0}$ is the space of $\Lambda$-continuous functions.

Before continuing, some comments about conditional expectation in the context of paths and functionals. Until now, we have not considered any probability framework. We then fix throughout the paper a probability space $(\Omega, \cF, \bP)$. For any $s \leq t$ in $[0,T]$, denote by $\Lambda_{s,t}$ the space of $\bR$-valued c\`adl\`ag paths on $[s,t]$. Now define the operator $(\cdot \ \otimes \ \cdot) : \Lambda_{s,t} \times \Lambda_{t,T} \longrightarrow \Lambda_{s,T}$, the \textit{concatenation} of paths, by
$$(Y \otimes Z)(u) = \left\{
\begin{array}{ll}
  y_u,  &\mbox{ if } s \leq u < t, \\
  z_u - z_t + y_t, &\mbox{ if }  t \leq u \leq T,
\end{array}
\right.$$
which is a paste of $Y$ and $Z$.

Given functionals $\mu$ and $\sigma$, we consider a process $x$ given by the Stochastic Differential Equation (SDE)
\begin{align}
dx_s = \mu(X_s)ds + \sigma(X_s)dw_s, \label{eq:sde}
\end{align}
with $s \geq t$ and $X_t = Y_t$. The process $(w_s)_{s \in [0,T]}$ denotes a standard Brownian motion in $(\Omega, \cF, \bP)$ and we assume there exists a unique strong solution for the SDE (\ref{eq:sde}). This unique solution will be denoted by $x_s^{Y_t}$ and the path solution from $t$ to $T$ by $X_{t,T}^{Y_t}$. We forward the reader, for instance, to \cite{rogerswilliams} for results on SDEs with functional coefficients.

\begin{remark}[Strong Solutions]\label{rmk:strong_solution}
Unique strong solution of (\ref{eq:sde}) might be achieved by requiring that $\mu$ and $\sigma$ are in $\bC$ and satisfy the usual (fixed-time) Lipschitz and linear growth conditions:
\begin{align}
&|\mu(Y_t) - \mu(Z_t)| + |\sigma(Y_t) - \sigma(Z_t)| \leq K \|Y_t - Z_t\|_{\infty},\\
&|\mu(Y_t)| + |\sigma(Y_t)| \leq K(1 + \|Y_t\|_{\infty}),
\end{align}
for all $Y_t, Z_t \in \Lambda$, where $K > 0$ is constant. The continuity of $\mu$ and $\sigma$ guarantee the proper measurability conditions, see Section \ref{sec:int_by_parts}
\end{remark}

Finally, we define the \textit{conditioned expectation} as
\begin{align}
\bE[g(X_T) \ | \ Y_t] = \bE[g(Y_t \otimes X_{t,T}^{Y_t})], \label{eq:conditioned_expec}
\end{align}
for any $Y_t \in \Lambda$. The path $Y_t \otimes X_{t,T}^{Y_t} \in \Lambda_T$ is equal to the path $Y_t$ up to $t$ and follows the dynamics of the SDE (\ref{eq:sde}) from $t$ to $T$ with initial path $Y_t$. Moreover, if we define the filtration $\cF_t^x$ generated by $\{x_s \ ; \ s \leq t\}$, one may prove
$$\bE[g(X_T) \ | \ X_t(\omega)] = \bE[g(X_T) \ | \ \cF_t^x](\omega) \quad \bP\mbox{-a.s.}$$
where the expectation on the left-hand side is the one discussed above and the one on the right-hand side is the usual \textit{conditional expectation}.

An interesting issue regarding conditioned expectation is to study its smoothness within the functional It\^o calculus framework. It would clearly depend on the smoothness of the functional $g$. A more intricate dependence would be with respect to the process $x$ and its coefficients. A partial answer is given in \cite{fito_bsde_peng}, where the authors derived conditions on $g$ so that the conditioned expectation operator belongs to $\bC^{1,2}$ in the Brownian motion case.

For the sake of completeness the functional It\^o formula is stated here. The proof can be found in \cite{fito_dupire}.
\begin{theorem}[Functional It\^o Formula; \cite{fito_dupire}]\label{thm:fif}
Let $x$ be a continuous semimartingale and $f \in \bC^{1,2}$. Then, for any $t \in [0,T]$,
$$f(X_t) = f(X_0) + \int_0^t \Delta_t f(X_s) ds + \int_0^t \Delta_x f(X_s) dx_s + \frac{1}{2} \int_0^t \Delta_{xx} f(X_s) d\langle x\rangle_s \quad \bP\mbox{-a.s.}$$
\end{theorem}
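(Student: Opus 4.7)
The plan is to imitate the proof of the classical It\^o formula, replacing ordinary Taylor expansions with expansions along the two admissible directions in $\Lambda$: horizontal (time extension) and vertical (space bump). First, by a standard localization, I would stop $x$ at $\tau_N = \inf\{s : |x_s| + \langle x \rangle_s > N\} \wedge t$, so I can assume $X$ takes values in a compact subset of $\Lambda$; on such a subset, $\Lambda$-continuity of $f$, $\Delta_t f$, $\Delta_x f$ and $\Delta_{xx} f$ upgrades to uniform continuity and boundedness. Then I would take a sequence of partitions $\pi_n : 0 = t_0^n < t_1^n < \dots < t_{k_n}^n = t$ with $|\pi_n| \to 0$, chosen so that the corresponding Riemann sums for $\langle x \rangle$ converge uniformly on $[0,t]$ a.s.

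The core step is the following telescoping decomposition. Define a piecewise approximation $X^n$ that agrees with $X$ on partition points and is flatly extended between them: concretely, $X^n$ evolves on $[t_i^n, t_{i+1}^n)$ as the flat extension of $X_{t_i^n}$, and at time $t_{i+1}^n$ receives the bump $h_i = x_{t_{i+1}^n} - x_{t_i^n}$. Writing $X^n_{t_{i+1}^n-}$ for the flat extension $X_{t_i^n,\, t_{i+1}^n - t_i^n}$, telescoping gives
\begin{align*}
f(X^n_t) - f(X_0) &= \sum_{i} \bigl[f(X^n_{t_{i+1}^n-}) - f(X^n_{t_i^n})\bigr] \\
&\quad + \sum_{i} \bigl[f(X^n_{t_{i+1}^n}) - f(X^n_{t_{i+1}^n-})\bigr].
\end{align*}
The first sum is a pure time increment of a flatly extended path, so by definition of $\Delta_t f$ together with the mean value theorem applied to $\delta \mapsto f(X_{t_i^n,\delta})$, it equals $\sum_i \Delta_t f(X_{t_i^n,\, \theta_i})(t_{i+1}^n - t_i^n)$ for some $\theta_i \in [0, t_{i+1}^n - t_i^n]$. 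For the second sum I would apply the second-order vertical Taylor expansion
\[
f(Y^h) - f(Y) = \Delta_x f(Y)\, h + \tfrac{1}{2}\, \Delta_{xx} f(Y)\, h^2 + r(Y,h)\, h^2,
\]
with $r(Y,h) \to 0$ as $h \to 0$ uniformly on the compact path set, to obtain
\[
\sum_i \Delta_x f(X^n_{t_{i+1}^n-})\, h_i + \tfrac{1}{2} \sum_i \Delta_{xx} f(X^n_{t_{i+1}^n-})\, h_i^2 + \sum_i r(X^n_{t_{i+1}^n-}, h_i)\, h_i^2.
\]

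Finally, I would pass to the limit $n \to \infty$. On the left, $f(X^n_t) \to f(X_t)$ by $\Lambda$-continuity, since $d_\Lambda(X^n_t, X_t) \to 0$ a.s.\ (continuity of $x$ plus mesh shrinking). The time sum converges to $\int_0^t \Delta_t f(X_s)\, ds$ by uniform continuity of $\Delta_t f$ along $X$. The crux, and the main technical obstacle, is to show that
\[
\sum_i \Delta_x f(X^n_{t_{i+1}^n-})\, (x_{t_{i+1}^n} - x_{t_i^n}) \longrightarrow \int_0^t \Delta_x f(X_s)\, dx_s
\]
in probability; this requires invoking the Riemann-sum convergence theorem for It\^o integrals of \emph{adapted, c\`agl\`ad-like} integrands and checking that $\Delta_x f(X^n_{t_{i+1}^n-})$ approximates $\Delta_x f(X_{s})$ uniformly on $[t_i^n, t_{i+1}^n]$, which again rests on $\Lambda$-continuity and the definition of $d_\Lambda$. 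The quadratic term converges to $\tfrac{1}{2}\int_0^t \Delta_{xx} f(X_s)\, d\langle x\rangle_s$ by the classical convergence of quadratic variation sums against a continuous function, and the remainder term vanishes because $\sum_i h_i^2$ is bounded in probability (it converges to $\langle x\rangle_t$) while $\max_i |r(X^n_{t_{i+1}^n-}, h_i)| \to 0$ a.s. Removing the stopping time $\tau_N$ by sending $N \to \infty$ completes the proof.
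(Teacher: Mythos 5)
The paper does not prove this theorem itself---it defers to Dupire's original paper---and your sketch reproduces precisely the argument used there (and made rigorous by Cont and Fourni\'e): telescope along a piecewise-constant approximation of the path, split each increment into a horizontal step controlled by $\Delta_t f$ and a vertical step controlled by a second-order Taylor expansion via $\Delta_x f$ and $\Delta_{xx} f$, then pass to the limit. The outline is correct, and you rightly identify where the real work lies: the relative compactness of the set of approximating paths (needed to upgrade $\Lambda$-continuity to uniform continuity, since the c\`adl\`ag paths $X^n_s$ do not lie on the trajectory of $X$) and the dominated-convergence argument for the Riemann sums converging to the It\^o integral.
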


\subsection{An Integration by Parts Formula for $\Delta_x$}\label{sec:int_by_parts}

In this section, we present some results from \cite{rama_cont_fito_mart} regarding the adjoint of $\Delta_x$. Fix a continuous square-integrable martingale $(x_t)_{t \in [0,T]}$ and the filtration generated by it, $(\cF^x_t)_{t \in [0,T]}$.

We denote the space of continuous square-integrable martingales in $[0,T]$ with respect to the filtration $(\cF^x_t)_{t \in [0,T]}$ by $\cM^2_c$ and we define
\begin{align}
H^2_x & = \left\{ f \in \bC \ ; \  \bE\left[\int_0^T f^2(X_t) d\langle x \rangle_t \right] < +\infty \right\}, \label{eq:H2}\\
L^2_{loc,x} & = \left\{ f \in \bC \ ; \  \int_0^T f^2(X_t) d\langle x \rangle_t  < +\infty \ \ \bP-\mbox{a.s. } \right\}, \label{eq:L2_loc}\\
M^2_x & = \Big\{ f \in \bC \ ; \ (f(X_t))_{t \in [0,T]} \in \cM^2_c \Big\}. \label{eq:M2}
\end{align}
We could consider more general measurability conditions on $f$ to define the spaces above. However, $\Lambda$-continuity of the functional and continuity of the process $x$ guarantee the required measurability in order to consider stochastic integrals of $f(X)$ with respect to $x$, namely $(f(X_t))_{t \in [0,T]}$ will be progressively measurable with respect to $(\cF^x_t)_{t \in [0,T]}$.

Consider now the inner products
\begin{align}
\langle f , g \rangle_{\cH^2_x} & = \bE \left[ \int_0^T f(X_t) g(X_t) d\langle x \rangle_t \right], \label{eq:ip_H2} \\
\langle f , g \rangle_{\cM^2_x} & = \bE \left[ f(X_T) g(X_T) \right], \label{eq:ip_M2}
\end{align}
in $H^2_x$ and $M_x^2$, respectively. So that (\ref{eq:ip_H2}) and (\ref{eq:ip_M2}) are proper inner products, it is necessary to suitably identify elements of these spaces as follows:
$$f \sim g \Leftrightarrow f(X_t) = g(X_t), \quad d\langle x\rangle_t \times d\bP.$$
Thus the quotient spaces $\cH^2_x = H^2_x/\sim$ and $\cM^2_x = M^2_x/\sim$ are both Hilbert spaces. 

\begin{remark}\label{rmk:L2_loc}
Notice that since we are considering $f$ $\Lambda$-continuous, then we clearly have $f \in L^2_{loc,x}$.
\end{remark}

Define now the It\^o integral operator $\cI_x: \cH^2_x \rightarrow \cM^2_x$ as
$$\cI_x(f)(t) = \int_0^t f(X_s) dx_s,$$
which is an isometry. Indeed,
\begin{align}
\langle f, g \rangle_{\cH^2_x} = \langle \cI_x(f), \cI_x(g) \rangle_{\cM^2_x}. \label{eq:isometry}
\end{align}

A \textit{test functional} is an element of
\begin{align}
\cD_x = \{ f \in \bC^{1,2} \cap \cM^2_x \ ; \ \Delta_x f \in \cH^2_x\}. \label{eq:test_functional}
\end{align}
The next proposition describes the integration by parts formula of the operator $\Delta_x$ in the space $\cD_x$.

\begin{proposition}[\cite{rama_cont_fito_mart}]
For any $f \in \cD_x$ and $g \in \cH^2_x$,
\begin{align}
\langle \Delta_xf , g \rangle_{\cH^2_x} = \langle f, \cI_x(g) \rangle_{\cM^2_x}. \label{eq:ibp}
\end{align}
\end{proposition}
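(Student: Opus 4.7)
The plan is to apply the functional It\^o formula to $f$, exploit the martingale property built into the definition of $\cD_x$ to kill the finite variation terms, and then conclude via the It\^o isometry (\ref{eq:isometry}).

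First, since $f \in \bC^{1,2}$ and $x$ is a continuous semimartingale (in fact a square-integrable martingale), Theorem~\ref{thm:fif} gives
$$f(X_T) = f(X_0) + \int_0^T \Delta_t f(X_s)\,ds + \int_0^T \Delta_x f(X_s)\,dx_s + \frac{1}{2}\int_0^T \Delta_{xx} f(X_s)\,d\langle x\rangle_s \quad \bP\text{-a.s.}$$
Stopping at $t\in[0,T]$ and using the condition $f \in \cM^2_x$ (so that $(f(X_t))_{t\in[0,T]}$ is a continuous square-integrable martingale), we compare this identity with the Doob--Meyer decomposition of $f(X_\cdot)$. Uniqueness of the decomposition of a continuous semimartingale, together with the fact that a continuous local martingale of finite variation starting at $0$ is identically $0$, forces
$$\int_0^t \Delta_t f(X_s)\,ds + \frac{1}{2}\int_0^t \Delta_{xx} f(X_s)\,d\langle x\rangle_s = 0 \qquad \bP\text{-a.s., for all } t\in[0,T].$$
Consequently
$$f(X_T) - f(X_0) = \int_0^T \Delta_x f(X_s)\,dx_s = \cI_x(\Delta_x f)(T),$$
which is well-defined since $\Delta_x f \in \cH^2_x$ by definition of $\cD_x$.

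Next, I would plug this into the inner product on $\cM^2_x$:
$$\langle f, \cI_x(g)\rangle_{\cM^2_x} = \bE\bigl[f(X_T)\,\cI_x(g)(T)\bigr] = \bE\bigl[f(X_0)\,\cI_x(g)(T)\bigr] + \bE\bigl[\cI_x(\Delta_x f)(T)\,\cI_x(g)(T)\bigr].$$
The first summand vanishes: $f(X_0)$ is $\cF^x_0$-measurable while $\cI_x(g)$ is a martingale starting at $0$, so conditioning on $\cF^x_0$ gives zero. For the second summand, the It\^o isometry (\ref{eq:isometry}) applied to $\Delta_x f,\,g \in \cH^2_x$ yields
$$\bE\bigl[\cI_x(\Delta_x f)(T)\,\cI_x(g)(T)\bigr] = \langle \cI_x(\Delta_x f), \cI_x(g)\rangle_{\cM^2_x} = \langle \Delta_x f, g\rangle_{\cH^2_x},$$
which is exactly (\ref{eq:ibp}).

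The main obstacle is the vanishing of the finite variation part in the functional It\^o expansion. This is where the assumption $f \in \cM^2_x$ does the real work: one needs to know that the canonical semimartingale decomposition delivered by the functional It\^o formula coincides with the Doob--Meyer decomposition, so that the hypothesis ``$f(X_\cdot)$ is a martingale'' forces the $ds$ and $d\langle x\rangle_s$ integrals to cancel pathwise. Once this identification is made, the rest is a direct application of the isometry and an observation about the initial value.
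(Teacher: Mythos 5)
Your proposal is correct and follows essentially the same route as the paper: apply the functional It\^o formula, use $f \in \cM^2_x$ together with uniqueness of the semimartingale decomposition to kill the $ds$ and $d\langle x\rangle_s$ terms, and conclude via the It\^o isometry (\ref{eq:isometry}). The only cosmetic difference is that you handle $f(X_0)$ explicitly by conditioning on $\cF^x_0$, whereas the paper normalizes $f(X_0)=0$ without loss of generality at the outset --- the justification is the same observation in both cases.
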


\begin{proof}
Since $\bE[\cI_x(g)] = 0$ and the goal is to compute $\Delta_x f$, it can be assumed without loss of generality that $f(X_0) = 0$. Then, by the Functional It\^o Formula, Theorem \ref{thm:fif},
\begin{align*}
\cI_x(\Delta_x f)(t) &= \int_0^t \Delta_x f(X_s) dx_s = f(X_t) - \int_0^t \Delta_t f(X_s) ds - \frac{1}{2} \int_0^t \Delta_{xx} f(X_s) d\langle x \rangle_s,
\end{align*}
and thus, since $f \in \cM^2_x$, by the uniqueness of the semimartingale decomposition,
$$\int_0^t \Delta_t f(X_s) ds + \frac{1}{2} \int_0^t \Delta_{xx} f(X_s) d\langle x \rangle_s = 0.$$
Therefore
$$\cI_x(\Delta_x f)(t) = f(X_t),$$
which implies the integration by parts formula:
$$\langle \Delta_xf , g \rangle_{\cH^2_x} = \langle \cI_x(\Delta_x f), \cI_x(g) \rangle_{\cM^2_x} = \langle f, \cI_x(g) \rangle_{\cM^2_x},$$
for all $f \in \cD_x$ and $g \in \cH^2_x$, where we have used It\^o Isometry (\ref{eq:isometry}).
\end{proof}

More generally, as it was shown in \cite{rama_cont_fito_mart}, the operator $\Delta_x$ is closable in $\cM^2_x$ and its adjoint is the It\^o integral.

\subsection{Path-Dependent PDE}\label{sec:pdv}

Suppose that the dynamics of a stock price $x$, under a risk-neutral measure, is given by the path-dependent volatility model (\cite{foschi2008path}, \cite{guyon_path_vol} and \cite{complete_sv_rogers}, for instance),
\begin{align}
dx_t = r x_t dt + \sigma(X_t)dw_t. \label{eq:PD_vol}
\end{align}
So, a no-arbitrage price of a path-dependent derivative with maturity $T$ and payoff given by the functional $g: \Lambda_T \longrightarrow \bR$, which will be called \textit{contract}, is given by
$$f(Y_t) = e^{-r(T-t)}\bE\left[\left. g(X_T) \ \right| \ Y_t \right],$$
see Equation (\ref{eq:conditioned_expec}) for the exact definition of this quantity. This expectation is taken under the chosen risk-neutral measure. Finally, we state the path-dependent extension of the pricing Partial Differential Equation (PDE), which is acronymed PPDE; see for instance \cite{fito_dupire, fito_zhang_1, fito_touzi_ppde1, fito_touzi_ppde2, fito_bsde_peng}.

\begin{theorem}[Pricing PPDE; \cite{fito_dupire}]\label{thm:functional_pde_theorem}
If the price of a path-dependent derivative with contract $g$, denoted by the functional $f$, belongs to $\bC^{1,2}$, then, for any $Y_t$ in the topological support of the process $x$,
\begin{align}\label{eq:functional_pde}
\Delta_t f(Y_t) + \frac{1}{2} \sigma^2(Y_t) \Delta_{xx} f(Y_t) + r y_t \Delta_x f(Y_t) - r f(Y_t) = 0,
\end{align}
with final condition $f(Y_T) = g(Y_T)$.
\end{theorem}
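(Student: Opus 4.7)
The plan is to combine the fact that the discounted price is a martingale with the Functional It\^o Formula (Theorem \ref{thm:fif}) and then invoke uniqueness of the Doob--Meyer/semimartingale decomposition, exactly as in the classical PDE derivation.

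First I would set $M_s = e^{-rs} f(X_s)$ for $s \in [t,T]$. By the tower property applied to the definition $f(Y_t) = e^{-r(T-t)}\bE[g(X_T) \mid Y_t]$, the process $M_s$ coincides with $e^{-rT}\bE[g(X_T) \mid \cF_s^x]$, which is a uniformly integrable martingale with respect to $(\cF_s^x)_{s \in [0,T]}$.

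Next I would apply the Functional It\^o Formula to $f \in \bC^{1,2}$ along the dynamics (\ref{eq:PD_vol}). Using $d x_s = r x_s\, ds + \sigma(X_s)\, dw_s$ and $d\langle x\rangle_s = \sigma^2(X_s)\, ds$, I get
\begin{align*}
df(X_s) = \Bigl[\Delta_t f(X_s) + r x_s \Delta_x f(X_s) + \tfrac{1}{2}\sigma^2(X_s)\Delta_{xx}f(X_s)\Bigr] ds + \sigma(X_s)\Delta_x f(X_s)\, dw_s.
\end{align*}
The ordinary product rule (or It\^o's formula applied to $e^{-rs}$ times $f(X_s)$) then gives
\begin{align*}
dM_s = e^{-rs}\Bigl[\Delta_t f(X_s) + r x_s \Delta_x f(X_s) + \tfrac{1}{2}\sigma^2(X_s)\Delta_{xx}f(X_s) - r f(X_s)\Bigr] ds + e^{-rs}\sigma(X_s)\Delta_x f(X_s)\, dw_s.
\end{align*}

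Since $M$ is a (continuous) martingale, the uniqueness of the semimartingale decomposition forces the finite-variation part above to vanish identically, i.e.
\begin{align*}
\Delta_t f(X_s) + r x_s \Delta_x f(X_s) + \tfrac{1}{2}\sigma^2(X_s)\Delta_{xx}f(X_s) - r f(X_s) = 0 \qquad ds \otimes d\bP\text{-a.e.}
\end{align*}
This is the desired equation, but only almost surely along the random path $X$. The final step is to promote this to a pointwise statement at every $Y_t$ in the topological support of $x$: each term of the left-hand side is $\Lambda$-continuous in $Y_t$ (by hypothesis $f \in \bC^{1,2}$ and continuity of $\sigma$), and by definition of the topological support every such $Y_t$ is an accumulation point of trajectories of $X$, so continuity of both sides transfers the a.s.\ identity to a pointwise identity. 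The terminal condition $f(Y_T)=g(Y_T)$ is immediate from the definition $f(Y_T) = e^{0}\bE[g(Y_T)\mid Y_T] = g(Y_T)$.

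The main technical obstacle is the last step: carefully arguing the support-extension. One must check that the null set on which the drift-vanishing identity fails does not depend on $s$, which is handled by the path-continuity of $s \mapsto \Delta_t f(X_s) + r x_s \Delta_x f(X_s) + \tfrac{1}{2}\sigma^2(X_s)\Delta_{xx}f(X_s) - r f(X_s)$ (coming from $\Lambda$-continuity of the derivatives); combined with the density of realized trajectories in the topological support of $x$, this upgrades the almost-everywhere identity into the pointwise PPDE claimed.
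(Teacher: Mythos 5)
Your proposal is correct and follows exactly the route the paper intends: the paper itself omits the proof of Theorem \ref{thm:functional_pde_theorem}, deferring to \cite{fito_dupire}, where the argument is precisely yours --- discounted price as a martingale, the Functional It\^o Formula, vanishing of the finite-variation part by uniqueness of the semimartingale decomposition, and extension from the a.s.\ statement along trajectories to the topological support via $\Lambda$-continuity (the same decomposition-uniqueness device the paper uses in its proof of the integration-by-parts formula~(\ref{eq:ibp})). Your closing remark about the $s$-dependence of the null set and the continuity needed to upgrade the $ds\otimes d\bP$-a.e.\ identity is exactly the technical point that makes the support-extension work.
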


\begin{remark}\label{rmk:stroock_varadhan}
In local volatility models of \cite{dupire94} ($\sigma(Y_t) = \sigma(t,y_t)$), under mild assumptions on $\sigma$, the Stroock-Varadhan Support Theorem states that the topological support of $x$ is the space of continuous paths starting at $x_0$, see for instance \cite[Chapter 2]{pinsky95}. So, under these assumptions, the PPDE (\ref{eq:functional_pde}) will hold for any continuous path. See Appendix \ref{sec:topological_support} for a discussion on this type of result in the case of SDEs of the form (\ref{eq:PD_vol}).
\end{remark}

\section{Path-Dependence}\label{sec:weakly_path_depend_section}

The goal of this section is to analyze the commutation issue of the operators $\Delta_x$ and $\Delta_t$. To start, consider the following example
$$\I(Y_t) = \int_0^t y_u du.$$
A simple computation shows
$$\Delta_t \I(Y_t) = y_t \mbox{ and } \Delta_x \I(Y_t) = 0,$$
and hence
$$\Delta_x(\Delta_t \I)(Y_t) = 1 \neq 0 = \Delta_t(\Delta_x \I)(Y_t).$$
On the other hand, it is clear that the operators commute when applied to functionals of the form $f(Y_t) = \phi(t,y_t)$, where $\phi$ is smooth. Therefore, one could ask if the operators commute for a functional $f$ if and only if $f$ is of the form $\phi(t,y_t)$. The following counter-example shows that this is not true. Consider
\begin{align}
\II(Y_t) = \int_0^t \int_0^s y_u du ds, \label{eq:double_integral}
\end{align}
and then notice
$$\Delta_t \II(Y_t) = \int_0^t y_s ds \mbox{ and } \Delta_x \II(Y_t) = 0,$$
which clearly implies that
$$\Delta_x(\Delta_t \II)(Y_t) = 0 = \Delta_t(\Delta_x \II)(Y_t).$$

\begin{definition}[Lie Bracket]\label{def:lie_bracket}
The \textit{Lie bracket (or commutator)} of the operators $\Delta_t$ and $\Delta_x$ will play a fundamental role in what follows and it is defined as
$$\fL f(Y_t) = [\Delta_x, \Delta_t] f(Y_t) = \Delta_{xt} f(Y_t) - \Delta_{tx} f(Y_t),$$
where $\Delta_{xt} = \Delta_x \Delta_t $ and $f$ is such that all the derivatives above exist. Abusing the nomenclature, we will call the operator $\fL$ by simply Lie bracket.
\end{definition}

The following lemma gives an alternative definition for the Lie bracket. For its proof, we will assume the technical assumption on $f$:
{\fontsize{10}{10}\selectfont
\begin{align}\label{eq:tech_assumption}
\lim_{h \to 0} \frac{f\left( (Y_{t,\delta t})^h\right) - f(Y_{t,\delta t}) - f(Y_t^h) + f(Y_t)}{h \delta t} = \frac{\Delta_x f(Y_{t,\delta t}) - \Delta_x f(Y_t)}{\delta t} \mbox{ uniformly in } \delta t.
\end{align}}
\begin{lemma}\label{lem:lie_bracket_limit}
Consider a functional $f$ such that $\fL f$ exists as in Definition \ref{def:lie_bracket} and that Condition (\ref{eq:tech_assumption}) is satisfied. Then, the Lie bracket of a functional $f$ is given by the following limit,
$$\fL f(Y_t) = \lim_{\delta t \to 0^+ \atop h \to 0} \frac{ f((Y_{t,\delta t})^h) - f((Y_t^h)_{t,\delta t})}{h \delta t}.$$
\end{lemma}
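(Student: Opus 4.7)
The strategy is to recognize $\fL f(Y_t)$ and the displayed double limit as two discretizations of the same mixed second difference of $f$. I would start by telescoping through the rectangle of paths $\{Y_t,\ Y_t^h,\ Y_{t,\delta t},\ (Y_{t,\delta t})^h,\ (Y_t^h)_{t,\delta t}\}$, writing
\begin{equation*}
f((Y_{t,\delta t})^h) - f((Y_t^h)_{t,\delta t}) = A(h,\delta t) - B(h,\delta t),
\end{equation*}
where $A(h,\delta t) := f((Y_{t,\delta t})^h) - f(Y_{t,\delta t}) - f(Y_t^h) + f(Y_t)$ and $B(h,\delta t) := f((Y_t^h)_{t,\delta t}) - f(Y_t^h) - f(Y_{t,\delta t}) + f(Y_t)$. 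Both $A$ and $B$ are discrete mixed second differences, each of which, divided by $h\,\delta t$, is expected to converge to an iterated mixed second functional derivative, but with the two orders of differentiation interchanged.

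For $A(h,\delta t)/(h\,\delta t)$, the technical assumption~(\ref{eq:tech_assumption}) is precisely the statement
\begin{equation*}
\lim_{h \to 0} \frac{A(h,\delta t)}{h\,\delta t} = \frac{\Delta_x f(Y_{t,\delta t}) - \Delta_x f(Y_t)}{\delta t} \quad \text{uniformly in } \delta t,
\end{equation*}
so that the joint $(h,\delta t)$-limit agrees with the iterated one, which by definition of $\Delta_t$ equals $\Delta_t(\Delta_x f)(Y_t) = \Delta_{tx} f(Y_t)$. For $B(h,\delta t)/(h\,\delta t)$, I would regroup as
\begin{equation*}
\frac{1}{h}\left(\frac{f((Y_t^h)_{t,\delta t}) - f(Y_t^h)}{\delta t} - \frac{f(Y_{t,\delta t}) - f(Y_t)}{\delta t}\right);
\end{equation*}
for each fixed $h$ the inner $\delta t \to 0^+$ limit is the defining limit of $\Delta_t$ at $Y_t^h$ and at $Y_t$, producing $h^{-1}\bigl(\Delta_t f(Y_t^h) - \Delta_t f(Y_t)\bigr)$, and the final $h \to 0$ limit yields $\Delta_x(\Delta_t f)(Y_t) = \Delta_{xt} f(Y_t)$.

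Taking the difference of the two iterated limits produces $\Delta_{xt} f(Y_t) - \Delta_{tx} f(Y_t) = \fL f(Y_t)$, as asserted. The main obstacle in this scheme is the interchange-of-limits: for the $A$-piece, assumption~(\ref{eq:tech_assumption}) is exactly the uniformity in $\delta t$ that lets the joint limit coincide with the iterated one; this is the non-trivial ingredient and is why such a hypothesis has to be imposed. The $B$-piece requires no analogous assumption, because its inner $\delta t$-limit is by definition $\Delta_t$ itself, so the $\Lambda$-continuity of $\Delta_t f$ combined with the existence of $\Delta_x \Delta_t f(Y_t)$ suffices for the outer $h \to 0$ step.
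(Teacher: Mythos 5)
Your proof follows essentially the same route as the paper's: you split the numerator into the two mixed second differences $A$ and $B$, identify their iterated limits with $\Delta_{tx}f$ and $\Delta_{xt}f$ respectively, and use Condition (\ref{eq:tech_assumption}) (via Moore's interchange-of-limits theorem) to pass from iterated to joint limits --- indeed you are slightly more explicit than the paper, which merely writes down the two iterated limits and invokes Moore's result. The one blemish, namely that $A-B$ converges to $\Delta_{tx}f-\Delta_{xt}f=-\fL f$ rather than $+\fL f$ under the paper's convention $\Delta_{tx}=\Delta_t\Delta_x$, is a sign quirk already present in the paper's own statement and proof, so it is not a defect of your argument relative to the paper.
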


\begin{proof}
Firstly, notice that, since $\fL f$ exists,
\begin{align*}
\Delta_t \Delta_x f(Y_t) &= \lim_{\delta t \to 0^+} \lim_{h \to 0} \frac{f\left( (Y_{t,\delta t})^h\right) - f(Y_{t,\delta t}) - f(Y_t^h) + f(Y_t)}{h \delta t},\\
\Delta_x \Delta_tf(Y_t) &= \lim_{h \to 0} \lim_{\delta t \to 0^+}  \frac{f\left( (Y_t^h)_{t,\delta t}\right) - f(Y_{t,\delta t}) - f(Y_t^h) + f(Y_t)}{h \delta t}
\end{align*}
Now, by Condition (\ref{eq:tech_assumption}), the famous result by Moore about interchanging limit of functions (see \cite{moore_interchanging_limit}) can be employed and the result follows.
\end{proof}

This lemma gives a very interesting interpretation of the Lie bracket: it is a measure of instantaneous path-dependence of the functional $f$, i.e. it will be zero if, in the limit, the order of bump and flat extension of the path at the current time makes no difference. In Figure \ref{fig:bracket}, the term $(Y_{t,\delta t})^h$ is indicated in blue and the term $(Y_t^h)_{t,\delta t}$, in red. Lemma \ref{lem:lie_bracket_limit} also shows that the commutation issue for functionals is not just lack of smoothness as in the finite-dimensional case.
\begin{figure}[h!]
\begin{center}
\includegraphics{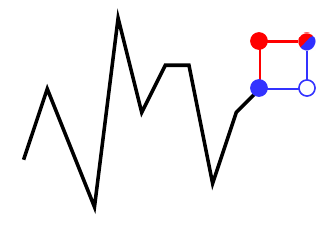} \caption{Geometric Interpretation of the $\fL.$} \label{fig:bracket}
\end{center}
\end{figure}

\begin{proposition}\label{prop:lie_bracket_chain_rule}
Suppose the functional $f:\Lambda \longrightarrow \bR$ is given by $f(X_t) = \phi(t, f_1(X_t),$ $\ldots, f_k(X_t))$, where $\phi: \bR_+ \times \bR^k \longrightarrow \bR$ has all the first and second order partial derivatives and the Lie bracket of $f_i$ exists for any $i=1,\ldots,k$. Then
\begin{align*}
\fL f(X_t) = \sum_{i=1}^k \frac{\partial \phi}{\partial x_i}(t, f_1(X_t), \ldots, f_k(X_t))\fL f_i(X_t)
\end{align*}
\end{proposition}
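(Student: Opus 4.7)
The plan is to reduce the statement to the limit characterization of Lemma \ref{lem:lie_bracket_limit}, namely
\begin{align*}
\fL f(X_t) = \lim_{\delta t \to 0^+,\, h \to 0} \frac{f((X_{t,\delta t})^h) - f((X_t^h)_{t,\delta t})}{\delta t \cdot h}.
\end{align*}
The geometric key is that both perturbed paths $(X_{t,\delta t})^h$ and $(X_t^h)_{t,\delta t}$ are defined on $[0, t+\delta t]$ and in particular share the same terminal time. When we substitute the composite $f = h(\,\cdot\,, f_1(X_\cdot), \ldots, f_k(X_\cdot))$, the first argument of the outer function is $t+\delta t$ in both evaluations, so only the inner values $f_i$ differ between the two terms.

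I would then exploit the $C^1$ regularity of the outer function $h$ through the integral form of the mean value theorem,
\begin{align*}
&h(t+\delta t,\, a_1,\ldots,a_k) - h(t+\delta t,\, b_1,\ldots,b_k) \\
&\qquad = \sum_{i=1}^k (a_i - b_i) \int_0^1 \frac{\partial h}{\partial x_i}\bigl(t+\delta t,\, b + s(a-b)\bigr)\, ds,
\end{align*}
applied with $a_i = f_i((X_{t,\delta t})^h)$ and $b_i = f_i((X_t^h)_{t,\delta t})$. Dividing by $\delta t \cdot h$ and passing to the joint limit, Lemma \ref{lem:lie_bracket_limit} applied to each $f_i$ supplies the factor $\fL f_i(X_t)$, while $\Lambda$-continuity of the $f_j$'s together with continuity of $\partial_{x_i} h$ drive each integral factor to $\partial_{x_i} h(t, f_1(X_t),\ldots,f_k(X_t))$. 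Summing over $i$ yields the claimed identity.

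The main obstacle is a hypothesis check: Lemma \ref{lem:lie_bracket_limit} requires not only the existence of $\fL f_i$ but also the technical condition \eqref{eq:tech_assumption}, so applying it to each $f_i$ carries a hidden requirement that should be verified in the setting of the proposition. If one wishes to bypass this, the identity can alternatively be derived by direct expansion: the first-order chain rules give $\Delta_x f = \sum_i \partial_{x_i} h\, \Delta_x f_i$ and $\Delta_t f = \partial_t h + \sum_i \partial_{x_i} h\, \Delta_t f_i$, and then applying $\Delta_t$ and $\Delta_x$ to these respectively (now using implicitly $h \in C^2$ to differentiate $\partial_{x_i} h$ once more) and subtracting, every second-order derivative of $h$ enters symmetrically either as $\partial_{t x_i} h = \partial_{x_i t} h$ or as $\partial_{x_i x_j} h = \partial_{x_j x_i} h$ and cancels by Schwarz's theorem, leaving exactly $\sum_i \partial_{x_i} h\, \fL f_i$.
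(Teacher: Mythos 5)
Your argument is correct, but your primary route is genuinely different from the paper's; the ``direct expansion'' you sketch at the end as a fallback is in fact exactly the proof the paper gives. The paper computes $\Delta_x f$ and $\Delta_t f$ by the first-order chain rules, differentiates each once more, and observes that all second-order terms in $h$ — the mixed $\partial^2 h/\partial x_i\partial t$ terms and the double sums in $\partial^2 h/\partial x_i \partial x_j$ — cancel in the difference $\Delta_{xt}f-\Delta_{tx}f$ by symmetry of mixed partials, leaving $\sum_i \partial_{x_i}h\,\fL f_i$. Your main approach instead goes through the limit characterization of Lemma \ref{lem:lie_bracket_limit}: since $(X_{t,\delta t})^h$ and $(X_t^h)_{t,\delta t}$ share the terminal time $t+\delta t$, the outer time argument of $h$ is identical in both terms, and the integral mean value theorem reduces the quotient to $\sum_i \frac{f_i((X_{t,\delta t})^h)-f_i((X_t^h)_{t,\delta t})}{\delta t\,h}$ times an averaged gradient of $h$. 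The trade-off is instructive: your route needs only $C^1$ regularity of $h$ (closer to the stated hypothesis ``all first order partial derivatives''), but it imports the technical Condition (\ref{eq:tech_assumption}) for each $f_i$ through Lemma \ref{lem:lie_bracket_limit}, plus enough $\Lambda$-continuity of the $f_j$ and continuity of $\partial_{x_i}h$ to pass to the limit in the averaged gradient — you rightly flag the first of these as a hidden hypothesis. The paper's route avoids Lemma \ref{lem:lie_bracket_limit} entirely and works straight from Definition \ref{def:lie_bracket}, but it silently upgrades $h$ to twice differentiable with symmetric mixed partials, which also exceeds the stated assumptions. Both proofs are valid under their respective (unstated) strengthenings; your observation that the two sets of extra hypotheses are different, and that neither matches the proposition as literally stated, is a worthwhile point.
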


\begin{proof}
This follows easily by direct computation. Notice
\begin{align*}
\Delta_x f(X_t) &= \sum_{i=1}^k \frac{\partial \phi}{\partial x_i} \Delta_x f_i(X_t), \\
\Delta_t f(X_t) &= \frac{\partial \phi}{\partial t} + \sum_{i=1}^k \frac{\partial \phi}{\partial x_i} \Delta_t f_i(X_t).
\end{align*}
Hence, one concludes
\begin{align*}
\Delta_t\Delta_x f(X_t) &= \sum_{i=1}^k  \frac{\partial^2 \phi}{\partial x_i \partial t} \Delta_x f_i(X_t) + \frac{\partial \phi}{\partial x_i} \Delta_t \Delta_x f_i(X_t)\\
&+ \sum_{i=1}^k\sum_{j=1}^k \frac{\partial^2 \phi}{\partial x_i \partial x_j} \Delta_x f_i(X_t) \Delta_t f_j(X_t), \\
\Delta_x\Delta_t f(X_t) &= \sum_{i=1}^k \frac{\partial^2 \phi}{\partial x_i \partial t} \Delta_x f_i(X_t) + \frac{\partial \phi}{\partial x_i} \Delta_x \Delta_t f_i(X_t)\\
&+ \sum_{i=1}^k \sum_{j=1}^k \frac{\partial \phi}{\partial x_i \partial x_j} \Delta_x f_j(X_t) \Delta_t f_i(X_t).
\end{align*}
\end{proof}

\subsection{Classification of Path-Dependence of Functionals}

Based on the Lie bracket of $\Delta_t$ and $\Delta_x$, we define several different categories of path-dependence for functionals.

\begin{definition}\label{def:classif}
A functional $f:\Lambda \longrightarrow \bR$ is called
\begin{itemize}
\item \textit{locally weakly path-dependent} if $\fL f = 0$; \\
\item \textit{path-independent} if there exists $\phi:\bR_+ \times \bR \longrightarrow \bR$ such $f(Y_t) = \phi(t,y_t)$; \\
\item\textit{discretely monitored} if there exist $0 < t_1 < \cdots < t_n \leq T$ and, for each $t \in [0,T]$, $\phi(t): \bR^{i(t)} \longrightarrow \bR$ such that
\begin{align}
f(Y_t) = \phi(t, y_{t_1}, \ldots, y_{t_{i(t)}},y_t),
\end{align}
where $i(t)$ is the maximum $i \in \{1,\ldots,n\}$ such that $t_i \leq t$; \\
\item $t_1$\textit{-delayed path-dependent} if $\fL f(Y_t) = 0, \ \forall \ t < t_1$. Moreover, a functional $f$ is said to be \textit{delayed path-dependent} if there exists $t_1 > 0$ such that $f$ is $t_1$-delayed path-dependent; \\
\item \textit{strongly path-dependent} if $\forall \ [s,t] \subset [0,T],  \ \exists \ u \in [s,t], \ \fL f(Y_u) \neq 0$.
\end{itemize}
\end{definition}

\begin{remark}

In Mathematical Finance, the terminology \textit{weakly path-dependent} was used to denominate derivative prices that are solution of the classical Black--Scholes PDE with some additional boundary conditions, like, for example, American Vanilla options and barrier options. Assuming that the events of interest of these contracts have not happened, their prices are still functions of just time and the current value of the stock; see, for instance, \cite{wilmott_quant_fin}. We would like to advert the reader that this meaning of the terminology \textit{weakly path-dependent} has no relation with our definition. Here we would like to emphasise the term \textit{locally} in our terminology, which stress the instantaneous aspect of the Lie bracket $\fL$.

\end{remark}

The next proposition analyzes the Lie-bracket of discretely monitored functionals.

\begin{proposition}\label{prop:lie_bracket_discret}
If $f$ is a discretely monitored functional such that its Lie bracket exists, then $\fL f(Y_t) = 0$ but for $t_1, \ldots, t_n$.
\end{proposition}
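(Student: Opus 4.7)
The plan is to reduce, for $t$ outside the set $\{t_1,\dots,t_n\}$, the functional calculus for $f$ to the ordinary finite-dimensional calculus for $\phi$: locally around $t$, the values $y_{t_1},\dots,y_{t_{i(t)}}$ are frozen parameters and $f$ depends on the path only through the pair $(t,y_t)$. The Lie bracket then collapses to the commutator of two classical partial derivatives of $\phi$, which vanishes by Clairaut's (Schwarz's) theorem under standard smoothness.

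Concretely, I fix $t\in[0,T]\setminus\{t_1,\dots,t_n\}$, set $k=i(t)$, and pick $\eps>0$ so small that $(t-\eps,t+\eps)\cap\{t_1,\dots,t_n\}=\emptyset$. For any $Y_t\in\Lambda_t$, any $h\in\bR$ and any $\delta t\in(0,\eps)$, the bump $Y_t\mapsto Y_t^h$ only moves the value at $t$, and the flat extension $Y_t\mapsto Y_{t,\delta t}$ only introduces a constant piece between $t$ and $t+\delta t$ without crossing any $t_j$, so in particular $i(t+\delta t)=k$ and the values $y_{t_1},\dots,y_{t_k}$ are untouched. Hence
\begin{align*}
f(Y_t^h)          &= \phi(t,\,y_{t_1},\dots,y_{t_k},\,y_t+h),\\
f(Y_{t,\delta t}) &= \phi(t+\delta t,\,y_{t_1},\dots,y_{t_k},\,y_t),
\end{align*}
and the definitions (\ref{eq:time_deriv})--(\ref{eq:space_deriv}) give
$$\Delta_x f(Y_t)=\partial_{x_{k+1}}\phi(t,y_{t_1},\dots,y_{t_k},y_t),\qquad \Delta_t f(Y_t)=\partial_t\phi(t,y_{t_1},\dots,y_{t_k},y_t),$$
where $\partial_{x_{k+1}}$ denotes the partial derivative in the last argument of $\phi$.

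Both $\Delta_x f$ and $\Delta_t f$ are again discretely monitored on the same grid, with generators $\partial_{x_{k+1}}\phi$ and $\partial_t\phi$ in place of $\phi$. Applying the same reduction once more turns the hypothesis ``$\fL f(Y_t)$ exists'' into the existence of both mixed partials of $\phi$ at $(t,y_{t_1},\dots,y_{t_k},y_t)$, and
$$\fL f(Y_t)=\bigl(\partial_{x_{k+1}}\partial_t\phi-\partial_t\partial_{x_{k+1}}\phi\bigr)(t,y_{t_1},\dots,y_{t_k},y_t).$$

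The main obstacle is the final step: the mere existence of the two mixed partials does not, on its own, force their equality, since the classical two-variable counterexamples survive verbatim in this setting. One therefore needs an additional regularity hypothesis on $\phi$ in its first and last variables — continuity of at least one of the two mixed partials on $(t-\eps,t+\eps)\times\bR$ suffices — after which Clairaut's theorem delivers $\fL f(Y_t)=0$. For the payoffs that occur in practice (discretely monitored Asian, lookback, cliquet, etc.) the relevant $\phi$ is smooth, and the vanishing of $\fL f$ at all non-monitoring instants is automatic.
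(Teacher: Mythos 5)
Your argument is correct in substance, but it takes a genuinely different route from the paper's, and the comparison is instructive. The paper's proof is a one-liner: for $t \in (t_i, t_{i+1})$ and $\delta t$ small enough that $t+\delta t$ stays in that interval, the two perturbed paths $(Y_{t,\delta t})^h$ and $(Y_t^h)_{t,\delta t}$ are indistinguishable to a discretely monitored $f$ --- both evaluate to $\phi(t+\delta t, y_{t_1},\ldots,y_{t_{i}}, y_t+h)$ --- so the numerator in the limit characterization of Lemma \ref{lem:lie_bracket_limit} vanishes identically and $\fL f(Y_t)=0$ with no further computation. You instead unpack $f$ into the finite-dimensional function $\phi$ with frozen monitoring values, identify $\Delta_x f$ and $\Delta_t f$ with $\partial_{x_{k+1}}\phi$ and $\partial_t\phi$, and reduce $\fL f$ to the classical commutator of mixed partials, which forces you to import Schwarz's theorem and hence an extra continuity hypothesis on a mixed partial of $\phi$ that is not in the statement. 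The paper's geometric argument buys exactly the avoidance of that hypothesis: it never needs to know that $\partial_t\partial_{x}\phi = \partial_x\partial_t\phi$, because the two orders of perturbation produce the \emph{same path value}, not merely the same second-order difference quotient in the limit. That said, your worry is not baseless --- it just attaches elsewhere: the paper's closing step ``hence $\fL f(Y_t)=0$'' tacitly relies on Lemma \ref{lem:lie_bracket_limit} and therefore on the interchange-of-limits condition (\ref{eq:tech_assumption}), since equality of the two difference quotients $A(h,\delta t)=B(h,\delta t)$ does not by itself force equality of the two iterated limits $\lim_{\delta t}\lim_h$ and $\lim_h\lim_{\delta t}$. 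So both proofs lean on some uniformity; yours places it on $\phi$ via Clairaut, the paper's places it on $f$ via Condition (\ref{eq:tech_assumption}). Your version is perfectly serviceable, and your explicit reduction makes the finite-dimensional structure transparent, but you should note that the direct path-equality argument is both shorter and closer to the spirit of Lemma \ref{lem:lie_bracket_limit}'s interpretation of $\fL$ as measuring the order-sensitivity of bump versus flat extension.
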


\begin{proof}
Take $t \in (t_i, t_{i+1})$. So, for sufficiently small $\delta t > 0$ such that $t + \delta t \in (t_i, t_{i+1})$, we must have $f((Y_{t,\delta t})^h) = \phi(t+\delta t, y_{t_1}, \ldots, y_{t_{i(t)}},y_t+h) = f((Y_t^h)_{t,\delta t})$. Hence, $\fL f(Y_t) = 0$.
\end{proof}

\subsection{The Impact of the Contract Functional on the \\
Path-Dependence of the Price Functional}

In this section, we will present some investigatory discussion on how the contract functional $g: \Lambda_T \longrightarrow \bR$ influences the path-dependence, as measured by the Lie bracket, of the price $f$. We do not aim to have the most general assumptions on $g$.

The goal here is to connect a measure of path-dependence of the contract functional $g$ to the path-dependence, as measure by the Lie bracket, of the price functional $f$. In particular, we will derive a condition on $g$ alone such that the price $f$ is locally weakly path-dependent. From this, it is clear that we should restrict ourselves to Markovian dynamics for $x$. We then may write
$$f(Y_t) = \bE[g(X_T) \ | \ Y_t] = \bE[g(Y_t \otimes X_{t,T}^{y_t})],$$
where $X_{t,T}^{y_t}$ is the path from $t$ to $T$ followed by $x$ starting at $(t,y_t)$. Therefore,
\begin{align*}
f((Y_t^h)_{t,\delta t}) &= \bE[g((Y_t^h)_{t,\delta t} \otimes X_{t+\delta t,T}^{y_t+h})],\\
f((Y_{t,\delta t})^h) &= \bE[g((Y_{t,\delta t})^h \otimes X_{t+\delta t,T}^{y_t+h})].
\end{align*}
If $f$ satisfies Lemma \ref{lem:lie_bracket_limit}, we find
$$\fL f(Y_t) = \lim_{\delta t \to 0^+ \atop h \to 0} \bE\left[\frac{g((Y_t^h)_{t,\delta t} \otimes X_{t+\delta t,T}^{y_t+h}) - g((Y_{t,\delta t})^h \otimes X_{t+\delta t,T}^{y_t+h})}{h \delta t} \right].$$
We have then the following result, with straightforward proof.
\begin{proposition}
Let $g: \Lambda_T \longrightarrow \bR$ be a contract functional such that, for any $Y_t \in \Lambda$ and $Z_T \in \Lambda_T$, the following limit exists:
\begin{align}\label{eq:g_condition}
\phi(Y_t, Z_T) = \lim_{\delta t \to 0^+ \atop h \to 0} \frac{g((Y_t^h)_{t,\delta t} \otimes Z_{t+\delta t,T}^{y_t+h}) - g((Y_{t,\delta t})^h \otimes Z_{t+\delta t,T}^{y_t+h})}{h \delta t},
\end{align}
and the following boundedness assumption is satisfied:
\begin{align}\label{eq:g_condition_bound}
\left|g((Y_t^h)_{t,\delta t} \otimes Z_{t+\delta t,T}^{y_t+h}) - g((Y_{t,\delta t})^h \otimes Z_{t+\delta t,T}^{y_t+h})\right| \leq c(Y_t, Z_T) \psi(h, \delta t),
\end{align}
where $c: \Lambda \times \Lambda_T \longrightarrow [0,+\infty)$ with $\bE[c(Y_t, X_T)] < +\infty$ and the limit
\begin{align}\label{eq:g_condition_phi}
\lim_{\delta t \to 0^+ \atop h \to 0} \frac{\psi(h, \delta t)}{h \delta t}
\end{align}
exists. Then, if $f$ satisfies Lemma \ref{lem:lie_bracket_limit}, we find $\fL f(Y_t) = \bE[\phi(Y_t, X_T)]$.
\end{proposition}

In our case, since $x$ is a continuous diffusion, we may restrict the limit (\ref{eq:g_condition}) to continuous $Z_T$. Readily, if $\phi \equiv 0$, $f$ will be locally weakly path-dependent. Additionally, if the limit in Equation (\ref{eq:g_condition}) is a constant (with respect to $Z$) different than zero, then $f$ will not be locally weakly path-dependent.

We would like to point out that if $\phi \equiv 0$ in Equation (\ref{eq:g_condition}), then $f$ will be locally weakly path-dependent for any Markovian model for $x$. It should be clear that, in the case of path-dependent dynamics for $x$, the argument above does not work.

%

Below we analyze two interesting examples of contract functionals $g$ and the path-dependence of its price functional $f$ under Markovian models.

\begin{example}

Let us consider the example of the double integral, see Equation (\ref{eq:double_integral}):
\begin{align*}
\I(Y_t) = \int_0^t y_s ds \mbox{ and } \II(Y_t) = \int_0^t \int_0^s y_u du ds.
\end{align*}
We have seen that $\fL \II = 0$. However, as we will verify, $f(Y_t) = \bE[\varphi(\II(X_T)) \ | \ Y_t]$ might not be locally weakly path-dependent, depending on $\varphi$, that we assume to be in $C^1(\bR)$ with bounded derivative.
By direct computation, we have
\begin{align*}
&\II((Y_t^h)_{t,\delta t} \otimes Z_{t+\delta t,T}^{y_t+h}) - \II((Y_{t,\delta t})^h \otimes Z_{t+\delta t,T}^{y_t+h}) \\
&= h \int_t^{t+\delta t} (s-t) ds + h (T - t - \delta t) \delta t= h \delta t (T-t)-\frac{h \delta t^2}{2}
\end{align*}
which implies
{\fontsize{11}{11}\selectfont\begin{align*}
\varphi(\II((Y_t^h)_{t,\delta t} \otimes Z_{t+\delta t,T}^{y_t+h})) &- \varphi(\II((Y_{t,\delta t})^h \otimes Z_{t+\delta t,T}^{y_t+h})) = \varphi'(c) \left(h \delta t (T-t)-\frac{h \delta t^2}{2}\right),
\end{align*}}
for some $c$ between $\II((Y_t^h)_{t,\delta t} \otimes Z_{t+\delta t,T}^{y_t+h})$ and $\II((Y_{t,\delta t})^h \otimes Z_{t+\delta t,T}^{y_t+h})$. Therefore,
{\fontsize{11}{11}\selectfont\begin{align*}
\lim_{\delta t \to 0^+ \atop h \to 0} &\frac{\varphi(\II((Y_t^h)_{t,\delta t} \otimes Z_{t+\delta t,T}^{y_t+h})) - \varphi(\II((Y_{t,\delta t})^h \otimes Z_{t+\delta t,T}^{y_t+h}))}{h \delta t} = (T-t)\varphi'(\II(Y_t \otimes Z_{t,T})),
\end{align*}}
which means that $f$ is not locally weakly path-dependent, if $\varphi$ satisfies $\bE[\varphi'(\II(Y_t \otimes X_{t,T}))] \neq 0$. In this case, the locally weakly path-dependence property depends on the dynamics $x$.

\end{example}

\begin{example}

Let us consider now another example: $g(Y_T) = \varphi(\QV(Y_T))$, where $\QV(Y_T)$ denotes the quadratic variation of the path $Y_t$, see Appendix \ref{sec:stoch_int_quad_var} for the precise definition of this pathwise quadratic variation functional. It is straightforward to compute
\begin{align*}
\QV((Y_{t,\delta t})^h \otimes Z_{t+\delta t,T}^{y_t+h}) &= \QV(Y_{t-}) + (y_t - y_{t-})^2 + h^2 + \QV(Z_{t+\delta t,T}^{y_t+h}),\\
\QV((Y_t^h)_{t,\delta t} \otimes Z_{t+\delta t,T}^{y_t+h}) &= \QV(Y_{t-}) + (y_t - y_{t-} + h)^2 + \QV(Z_{t+\delta t,T}^{y_t+h}).
\end{align*}
Therefore,
\begin{align*}
\QV((Y_t^h)_{t,\delta t} \otimes Z_{t+\delta t,T}^{y_t+h}) - \QV((Y_{t,\delta t})^h \otimes Z_{t+\delta t,T}^{y_t+h}) = 2(y_t - y_{t-})h
\end{align*}
which implies $\phi \equiv 0$ in Equation (\ref{eq:g_condition}), for paths $Y_t$ without discontinuity at $t$. Hence, $f(Y_t) = \bE[\varphi(\QV(X_T)) \ | \ Y_t]$ is locally weakly path-dependent at continuous paths under any Markovian dynamics for $x$.

\end{example}

\section{Greeks for Path-Dependent Derivatives}\label{sec:greeks_sec}

\subsection{Introduction}\label{sec:intro}

In \cite{malliavin_greeks1}, the authors presented a computationally efficient way to calculate Greeks for some path-dependent derivatives using tools of the Malliavin calculus. More specifically, they considered a time-homogenous local volatility model,
\begin{align}
dx_t = r x_t dt + \sigma(x_t)dw_t, \label{eq:local_vol_homog}
\end{align}
and contracts of the form
\begin{align*}
g(Y_T) = \phi(y_{t_1}, \ldots, y_{t_n}),
\end{align*}
where $0 < t_1 < \cdots < t_n \leq T$ are fixed times and $\phi: \bR^n \longrightarrow \bR$ is such that $g(X_T) \in L^2(\Omega, \cF, \bP)$. Under these assumptions, it was shown that
$$\Delta_xf(Y_0) = \bE\left[\left. \phi(x_{t_1}, \ldots, x_{t_n}) \int_0^T  \frac{a(t)z_t}{\sigma(x_t)} dw_t \ \right| \ Y_0\right],$$
where $x$ is the solution of (\ref{eq:local_vol_homog}) with $x_0 = Y_0$, $z$ is the tangent process (or \textit{first variation process}) described by the SDE
\begin{align}
dz_t = r z_t dt + \sigma'(x_t) z_t dw_t \label{eq:tangent_process}
\end{align}
with $z_0=1$, and
\begin{align*}
a \in \Gamma = \left\{ a \in L^2[0,T] \ ; \ \int_0^{t_i} a(t) dt = 1, \ \forall \ i = 1,\ldots,n \right\}.
\end{align*}

It is also assumed that $\sigma$ is uniformly elliptic, which in the one-dimensional case boils down to $\sigma$ being bounded from below.

If we define the weight
\begin{align}
\pi = \int_0^T  \frac{a(t)z_t}{\sigma(x_t)} dw_t,\label{eq:pi}
\end{align}
which does not depend on the derivative contract $g$, we may restate the result above as:
$$\Delta_xf(Y_0) = \bE[ \phi(x_{t_1}, \ldots, x_{t_n}) \pi \ | \ Y_0].$$

\subsection{The Path-Dependent Volatility Model}

We would like to remind the reader that we are considering the more general path-dependent volatility models, see Section \ref{sec:pdv}. For arithmetic simplicity, we shall assume that $r=0$:
\begin{align}
dx_t = \sigma(X_t) dw_t. \label{eq:pd_vol_r0}
\end{align}
In this case of path-dependent volatility, we \textit{define} the tangent process $z$ to be the solution of the linear SDE:
\begin{align}
dz_t = \Delta_x \sigma(X_t) z_t dw_t, \label{eq:pd_tangent}
\end{align}
where $z_0 = 1$.

\begin{remark}

We would like to point it out that the proof that $z$ is actually the tangent process of $x$, meaning that $z_t = \partial_{x_0} x_t$, will not be pursued here. As it will be clear later, regarding our application, it is only important that the process $z$ cancels certain terms when we compute the differential $d(\Delta_x f(X_t) z_t)$. Besides, notice that, in the case of local volatility function, $z$ becomes the usual tangent process of $x$, i.e. $z_t = \partial_{x_0} x_t$.

\end{remark}

\begin{remark}

It is very important to notice that the dynamics of the underlying process, $x$, will obviously influence in the path-dependence of the price functional $f(Y_t) = \bE[g(X_T) \ | \ Y_t]$. In particular, the price of a derivative might be locally weakly path-dependent under a local volatility model, but strongly path-dependent when considering a path-dependent volatility model. This aspect of path-dependence is really intricate and hence, in the examples presented in this paper, we shall consider local volatility models. Nonetheless, the general results will be derived in the full generality that the functional It\^o calculus theory allows, i.e. under path-dependent volatility models.

\end{remark}

\begin{remark}\label{rmk:functional_z}

In the lines of what was shown in Appendix \ref{sec:stoch_int_quad_var}, we will consider the functional $z$ such that $z(X_t) = z_t$, i.e. $z(Y_t) = E(I_h(Y)_t)$, where $h(Y_t) = \frac{\Delta_x \sigma(Y_t)}{\sigma(Y_t)}$, see Appendix \ref{sec:stoch_int_quad_var} for the definition of the functionals $E$ and $I_h$. Following the arguments outlined in this appendix, one can easily show that $z$ satisfies
\begin{align}
\Delta_x z(Y_t) = \frac{\Delta_x \sigma(Y_{t-})}{\sigma(Y_{t-})} z(Y_{t-}), \ \Delta_{xx} z(Y_t) = 0 \mbox{ and } \Delta_t z(Y_t) = 0.\label{eq:func_derivatives_of_z}
\end{align}

\end{remark}

We now list the assumptions on $\sigma$ that will be used in what follows. They will be assumed to hold throughout the paper.

\begin{assumptions}[on the path-dependent volatility $\sigma$]\label{assump:sigma}

\begin{enumerate}

\item[]

\item $\sigma > 0$;

\item $\sigma \in \bC^{0,1}$, i.e. $\sigma$ is $\Lambda$-continuous, $\Delta_x \sigma$ exists and it is also $\Lambda$-continuous;

\item SDEs (\ref{eq:pd_vol_r0}) and (\ref{eq:pd_tangent}) have unique strong solutions, see Remark \ref{rmk:strong_solution};

\item the topological support of the process $x$ contains all the continuous functions in $[0,T]$ starting at $x_0$, see Remark \ref{rmk:stroock_varadhan}.

\end{enumerate}

\end{assumptions}


\begin{remark}[Strong Solutions]
Unique strong solution of (\ref{eq:pd_vol_r0}) is guaranteed by requiring that $\sigma$ satisfies the usual Lipschitz and linear growth conditions, cf. Remark \ref{rmk:strong_solution}. Strong solution of Equation (\ref{eq:pd_tangent}) follows from the continuity in $t$ of $\Delta_x \sigma(X_t)$ (this is true under the assumption that $\sigma \in \bC^{0,1}$).
\end{remark}

\begin{remark}
We are constraining ourselves to one-dimensional processes in order to make the exposition clearer, although the extension to multi-dimensional processes is straightforward. Moreover, the results in the following sections in this paper will be derived assuming smoothness in the sense of $\bC$, but one should expect that they could be generalized to consider smooth functional in the sense of $\cC$ as discussed in Appendix \ref{sec:stoch_int_quad_var}.
\end{remark}

\subsection{Greeks for Weakly Path-Dependent Functionals}\label{sec:delta_weakly}

\subsubsection{Delta}\label{sec:delta}

The Delta of a derivative contract is the sensitivity of its price with respect to the current value of the underlying asset. Hence, if $f(X_t)$ denotes the price of the aforesaid derivative at time $t$, its Delta is given by $\Delta_x f(X_t)$.

Consider a path-dependent derivative with maturity $T$ and contract $g:\Lambda_T \longrightarrow \bR$. The price of this derivative is given by the functional $f:\Lambda \longrightarrow \bR$:
$$f(Y_t) = \bE[g(X_T) \ | \ Y_t],$$
for any $Y_t \in \Lambda$. In what follows we will perform some formal computations and hence we assume $f$ as smooth as necessary for such calculations. By the Pricing PPDE, Theorem \ref{thm:functional_pde_theorem}, we know
$$\Delta_t f(Y_t) + \frac{1}{2} \sigma^2(Y_t) \Delta_{xx} f(Y_t) = 0,$$
for any continuous path $Y_t$. Now, consider the tangent process $z$ given by Equation (\ref{eq:pd_tangent}). The main idea is to apply the Functional It\^o Formula, Theorem \ref{thm:fif}, to $\Delta_x f(X_t) z_t$. First, notice that applying $\Delta_x$ to the PPDE gives
\begin{align}
\Delta_{xt} f(Y_t) &+ \sigma(Y_t) \Delta_x \sigma(Y_t) \Delta_{xx} f(Y_t) + \frac{1}{2} \sigma^2(Y_t) \Delta_{xxx} f(Y_t) = 0 \label{eq:delta_x_PDE}
\end{align}
In order to conclude the above, the following result is needed: \textit{if $f(Y_t) = 0$, for all continuous paths Y, and $f \in \bC^{1,1}$ , then $\Delta_x f(Y_t)  = 0$, for all continuous paths Y}. The proof of this can be found in \cite[Theorem 2.2]{fournie_cont_thesis}. Hence
\begin{align*}
&d(\Delta_x f(X_t) z_t) = z_t d(\Delta_x f(X_t)) + \Delta_x f(X_t) dz_t + d(\Delta_x f(X_t)) dz_t\\
&=\left(\Delta_{tx} f(X_t) + \frac{1}{2} \sigma^2(X_t) \Delta_{xxx} f(X_t) + \sigma(X_t) \Delta_x \sigma(X_t) \Delta_{xx} f(X_t)\right)z_t dt \\
& + \left(\Delta_x \sigma(X_t) \Delta_x f(X_t) + \sigma(X_t) \Delta_{xx} f(X_t) \right) z_t dw_t.
\end{align*}
Moreover, we define the local martingale
\begin{align}
m_t = \int_0^t \left(\frac{\Delta_x \sigma(X_s)}{\sigma(X_s)} \Delta_x f(X_s) + \Delta_{xx} f(X_s) \right) z_s dx_s, \label{eq:martingale}
\end{align}
with $m_0 = 0$, where we are assuming certain integrability condition of the integrand. Using Equation (\ref{eq:delta_x_PDE}), we are able to derive the formula
\begin{align}
d(\Delta_x f(X_t) z_t) &= (\Delta_{tx} f(X_t) - \Delta_{xt} f(X_t))z_t dt + dm_t = -\fL f(X_t)z_tdt + dm_t. \label{eq:lie_bracket_delta_xf}
\end{align}

We start by stating the assumptions on the functional $f$:

%
%
%
%
%
%
%

\begin{assumptions}[on the regularity of the price functional $f$]\label{assump:delta_reg}

\begin{enumerate}

\item[]

\item the Lie bracket of $f$, $\fL f$, exists;

\item $f \in \bC^{2,3}$;

\item $g(X_T) \in L^2(\Omega, \cF, \bP)$.


\end{enumerate}

\end{assumptions}

\begin{assumptions}\label{assump:delta_path}
$\fL f(Y_t) = 0$, for continuous paths $Y_t$.
\end{assumptions}

In particular if $f$ is locally weakly path-dependent, then $f$ satisfies Assumptions \ref{assump:delta_path}. Hence, the following result holds true:

\begin{theorem}\label{thm:delta_0}
Consider a path-dependent derivative with maturity $T$ and contract $g:\Lambda_T \longrightarrow \bR$. If the price of this derivative, denoted by $f$, satisfies Assumptions \ref{assump:delta_reg} and \ref{assump:delta_path}, then $(\Delta_x f(X_t) z_t)_{t \in [0,T]}$ is a local martingale and the following formula for the Delta is valid
$$\Delta_x f(Y_0) = \bE\left[\left.g(X_T) \frac{1}{T}  \int_0^T \frac{z_t}{\sigma(X_t)} dw_t \ \right| Y_0 \ \right].$$
\end{theorem}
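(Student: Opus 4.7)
The plan is to start from the identity (\ref{eq:lie_bracket_delta_xf}), $d(\Delta_x f(X_t)z_t) = -\fL f(X_t)z_t\,dt + dm_t$, which has already been established in the excerpt. Assumption \ref{assump:delta_path} says $\fL f(Y_t)=0$ along continuous paths, and since the solution $X$ of (\ref{eq:pd_vol_r0}) is continuous, the drift term vanishes identically, so that $\Delta_x f(X_t)z_t = \Delta_x f(Y_0) + m_t$. This immediately shows $\Delta_x f(X_t)z_t$ is a local martingale. To promote it to a true martingale I would verify $L^2$-integrability of the integrand of $m$ in (\ref{eq:martingale}): the hypothesis $f \in \cD_x$ gives $\Delta_x f \in \cH^2_x$, the $\bC^{1,3}$-regularity controls $\Delta_{xx} f$, Assumption \ref{assump:sigma} supplies continuity and positivity of $\sigma$, and the linear SDE (\ref{eq:pd_tangent}) furnishes moment bounds on $z_t$; together these handle the integrability.

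Once martingality of $\Delta_x f(X_t)z_t$ is in hand, evaluating at an arbitrary $t \in [0,T]$ and using $z_0 = 1$ yields
\begin{equation*}
\Delta_x f(Y_0) = \bE\left[\left.\Delta_x f(X_t)\, z_t \right| Y_0\right], \qquad t \in [0,T].
\end{equation*}
Averaging this identity against the uniform density $1/T$ on $[0,T]$ and applying Fubini--Tonelli gives
\begin{equation*}
\Delta_x f(Y_0) = \frac{1}{T}\,\bE\left[\left.\int_0^T \Delta_x f(X_t)\,z_t\,dt \right| Y_0\right].
\end{equation*}
The constant weight $1/T$ is playing the role of a function $a \in \Gamma$ in the formula of Fourni\'e et al.\ recalled at the start of Section \ref{sec:intro}.

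To finish I would turn the $dt$-integral into a Skorokhod-type weighted payoff via the integration by parts formula (\ref{eq:ibp}). Rewriting $dt = d\langle x\rangle_t / \sigma^2(X_t)$ and applying (\ref{eq:ibp}) to $f \in \cD_x$ against the test functional $\psi(Y_t) = z(Y_t)/\sigma^2(Y_t)$ (with $z$ understood as the smooth functional of Remark \ref{rmk:functional_z}) produces
\begin{equation*}
\bE\left[\left.\int_0^T \Delta_x f(X_t)\,\frac{z_t}{\sigma^2(X_t)}\,d\langle x\rangle_t \right| Y_0\right] = \bE\left[\left.f(X_T)\int_0^T \frac{z_s}{\sigma^2(X_s)}\,dx_s \right| Y_0\right].
\end{equation*}
The terminal condition $f(X_T) = g(X_T)$ and the identity $dx_s = \sigma(X_s)\,dw_s$ collapse the stochastic integral on the right-hand side to $\int_0^T (z_s/\sigma(X_s))\,dw_s$, and dividing by $T$ delivers the announced weighted-expectation formula.

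The two technical hurdles I expect are (i) promoting $m$ from local to true martingale and (ii) verifying that $\psi = z/\sigma^2$ lies in $\cH^2_x$ so that (\ref{eq:ibp}) applies. Both are purely integrability questions rather than conceptual obstacles, and both are controlled by Assumption \ref{assump:sigma} (notably $\sigma > 0$) together with the standard linear-SDE moment bounds on $z$. The weakly path-dependent hypothesis itself is used only once, right at the start, where it kills the drift in (\ref{eq:lie_bracket_delta_xf}); everything downstream is a standard martingale and integration-by-parts computation, now carried out entirely inside the functional It\^o framework rather than via Malliavin calculus.
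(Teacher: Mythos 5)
Your proposal is correct and follows essentially the same route as the paper's own proof: kill the drift in (\ref{eq:lie_bracket_delta_xf}) via Assumption \ref{assump:delta_path}, obtain the martingale property of $\Delta_x f(X_t)z_t$, average over $[0,T]$, recognize the result as the $\cH^2_x$ inner product of $\Delta_x f$ with $\frac{1}{T}\frac{z}{\sigma^2(X)}$, and conclude with the integration by parts formula (\ref{eq:ibp}) together with $f(X_T)=g(X_T)$ and $dx_t=\sigma(X_t)\,dw_t$. The only cosmetic difference is that the paper dispatches the local-to-true martingale step by a localization argument rather than the explicit integrability checks you sketch.
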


\begin{proof}
By a localization argument outlined in Appendix \ref{sec:proof_delta_0}, we may assume that $f \in \cD_x$ and that $x$ and $m$ are martingales.\\

From Equation (\ref{eq:lie_bracket_delta_xf}), Assumption (\ref{assump:delta_path}) and since $X_t$ is a continuous path $\bP$-almost surely, we conclude
$$\Delta_x f(X_t)z_t = \Delta_x f(X_0) + m_t,$$
and then $(\Delta_x f(X_t) z_t)_{t \in [0,T]}$ is clearly a martingale. Now, integrating with respect to $t$, we get
$$\int_0^T \Delta_x f(X_t) z_t dt = \Delta_x f(X_0) T + \int_0^T m_t  dt.$$
Then taking expectations and noticing $\bE[m_t] = m_0 = 0$, we get
$$\bE\left[\int_0^T \Delta_x f(X_t) z_t dt\right] = \Delta_x f(X_0) T,$$
which implies
\begin{align}
\Delta_x f(X_0) &= \bE\left[\int_0^T \Delta_x f(X_t) \frac{1}{T} \frac{z_t}{\sigma^2(X_t)} \sigma^2(X_t) dt\right] \\
&= \left\langle \Delta_x f(X), \frac{1}{T} \frac{z}{\sigma^2(X)} \right\rangle_{\cH^2_x}.
\end{align}
Finally, since $f(X)$ and $x$ are martingales, by the integration by parts formula (\ref{eq:ibp}),
\begin{align*}
\Delta_x f(X_0) &= \left\langle f(X), \cI_x\left(\frac{1}{T} \frac{z}{\sigma^2(X)}\right) \right\rangle_{\cM^2_x} = \bE\left[g(X_T) \frac{1}{T}  \int_0^T \frac{z_t}{\sigma(X_t)} dw_t\right].
\end{align*}
\end{proof}

\begin{remark}\label{rmk:delta_0_bs}
In the Black--Scholes model (i.e. $\sigma(Y_t)= \sigma y_t$), we find the same result as in \cite{malliavin_greeks1}
$$ \Delta_{x}f(X_0)=\bE\left[g(x_T) \frac{w_T}{x_0 \sigma T}\right].$$
\end{remark}

\begin{remark}
Theorem \ref{thm:delta_0} states that, for locally weakly path-dependent functionals, the weight can take the form
\begin{align}
\pi = \frac{1}{T} \int_0^T  \frac{z_t}{\sigma(X_t)} dw_t,
\end{align}
cf. Equation (\ref{eq:pi}).
\end{remark}

\begin{remark}

Theorem \ref{thm:delta_0} also enlightens the question when the Delta is a martingale. The theorem affirms that the lost of martingality of the Delta comes from two factors: the stock price model through its tangent process $z$ and the path-dependence of the derivative contract in question.

For instance, let us consider a call option. It is a well-know fact that, under the Black-Scholes model, the Delta is not a martingale. Although the price of a call option is locally weakly path-dependent (actually it is path-independent), the tangent process in this model is given by $z_t = x_t/x_0$. On the other hand, under the Bachelier model, the Delta of a locally weakly path-dependent derivative contract is indeed a martingale, since $z_t = 1$ in this case.

\end{remark}

\begin{remark}

One would expect that the assumption $f \in \bC^{2,3}$ could be removed by using a density argument. However, there are no results in this direction available at the current development of the functional It\^o calculus theory and to develop such density arguments is outside the scope of this paper.

\end{remark}

\begin{corollary}\label{cor:delta_s}
Under the same hypotheses as in Theorem \ref{thm:delta_0}, for any $s \in [0,T]$, one has
\begin{align}
\Delta_x f(Y_s) = \frac{1}{(T-s) z(Y_s)} \bE\left[ \left. g(X_T) \int_s^T \frac{z_t}{\sigma(X_t)} dw_t \ \right| \ Y_s \right], \label{eq:delta_s}
\end{align}
where $z(Y_s)$ is the functional version of the tangent process $z$, see Remark \ref{rmk:functional_z}.
\end{corollary}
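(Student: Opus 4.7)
The plan is to mimic the proof of Theorem \ref{thm:delta_0} but starting the argument at time $s$ rather than at time $0$, then take a conditional expectation given $Y_s$ at the end. The key input from that proof is the identity
$$d(\Delta_x f(X_t) z_t) = -\fL f(X_t) z_t \, dt + dm_t,$$
with $m$ a martingale (Equation (\ref{eq:lie_bracket_delta_xf})). Under Assumption \ref{assump:delta_path}, the drift term vanishes on continuous paths, so $(\Delta_x f(X_t) z_t)_{t \in [0,T]}$ is a martingale. Integrating the identity from $s$ to $T$ and conditioning on $Y_s$ gives
$$(T-s)\,\Delta_x f(Y_s)\, z(Y_s) \;=\; \bE\!\left[\left.\int_s^T \Delta_x f(X_t)\, z_t\, dt \;\right|\; Y_s\right],$$
using that $\bE[m_t - m_s \mid Y_s] = 0$ for $t \geq s$ and that $z(Y_s)$ is $\cF^x_s$-measurable.

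\textbf{Next step.} I then need a conditional version of the integration by parts formula (\ref{eq:ibp}) on the interval $[s,T]$. Since $f(X)$ is a martingale, the proof of (\ref{eq:ibp}) can be reproduced verbatim on $[s,T]$: from the Functional It\^o Formula applied to $f \in \bC^{1,2}$ together with the PPDE (so the finite-variation part of $f(X)$ vanishes), one obtains
$$f(X_T) - f(X_s) \;=\; \int_s^T \Delta_x f(X_u)\, \sigma(X_u)\, dw_u.$$
Applying the conditional It\^o isometry with the integrand $h_u = z_u/\sigma(X_u)$ (which lies in the appropriate space by the assumptions on $\sigma$ and the explicit form (\ref{eq:tangent_proc_formula}) of $z$) yields
$$\bE\!\left[\left.\int_s^T \Delta_x f(X_u)\, z_u\, du \;\right|\; Y_s\right] \;=\; \bE\!\left[\left.(f(X_T)-f(X_s))\int_s^T \frac{z_u}{\sigma(X_u)}\, dw_u \;\right|\; Y_s\right].$$
The $f(X_s)$ term drops by $\cF^x_s$-measurability of $f(X_s)$ and the martingale property of the stochastic integral; the $f(X_T)$ term is exactly $g(X_T)$ by the terminal condition of the PPDE.

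\textbf{Conclusion.} Combining the two displays and dividing by $(T-s)\,z(Y_s)$ (which is strictly positive by Assumptions \ref{assump:sigma} and formula (\ref{eq:tangent_proc_formula})) gives the stated expression (\ref{eq:delta_s}). The main subtlety, which is not a deep obstacle but deserves care, is justifying the conditional integration by parts step: the formula (\ref{eq:ibp}) as stated is global on $[0,T]$, but the proof only uses the Functional It\^o Formula and the martingale property of $f(X)$, both of which localize cleanly to $[s,T]$. The remaining integrability needed to pass to conditional expectations is inherited from the membership $f \in \cD_x$ and Assumption \ref{assump:delta_reg}, possibly after a standard localization argument analogous to the one used in Theorem \ref{thm:delta_0} to promote $m$ from a local martingale to a martingale.
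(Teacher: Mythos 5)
Your proposal is correct and follows essentially the same route as the paper: the paper's own proof of Corollary \ref{cor:delta_s} is a one-line remark that the argument of Theorem \ref{thm:delta_0} applies ``with some minor differences'' once the integration by parts formula is extended to conditional expectations, and your write-up simply fills in exactly those details (restarting the martingale identity at time $s$, integrating over $[s,T]$, and using the conditional It\^o isometry in place of the global one). No discrepancy with the paper's intended argument.
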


\begin{proof}
The same argument is applied with some minor differences. Notice the study of the integration by parts formula for $\Delta_x$ can be easily extended to handle the conditional expectation.
\end{proof}

\subsubsection{Strongly Path-Dependent Functionals}

How would these formulas change if $f$ were strongly path-dependent? The integral form of Equation (\ref{eq:lie_bracket_delta_xf}) is
\begin{align}
\Delta_x f(X_0) = \Delta_x f(X_t)z_t + \int_0^t \fL f(X_s) z_sds - m_t. \label{eq:mart_non_zero_lie}
\end{align}
Integrating with respect to $t$ and taking expectation, we get
\begin{align}\label{eq:delta_path}
\Delta_x f(X_0) = \bE\left[\frac{1}{T}\int_0^T \Delta_x f(X_t) z_t dt \right] + \bE\left[\frac{1}{T}\int_0^T \int_0^t \fL f(X_s) z_s ds dt \right].
\end{align}
Now, for the first expectation, we use the same argument as in Theorem \ref{thm:delta_0} to conclude
\begin{align}
\bE\left[\frac{1}{T}\int_0^T \Delta_x f(X_t) z_t dt \right] = \bE\left[g(X_T) \frac{1}{T}\int_0^T \frac{z_t}{\sigma(X_t)} dw_t \right].\label{eq:delta_path_first}
\end{align}


We hence proved the following theorem:

\begin{theorem}\label{thm:delta_strong}
For a path-dependent derivative with maturity $T$ and contract $g$ such that its price, denoted by the functional $f$, satisfies Assumptions \ref{assump:delta_reg}, the following formula for the Delta holds:
\begin{align}\label{eq:delta_path_2}
\Delta_x f(X_0) = \bE\left[g(X_T) \frac{1}{T}\int_0^T \frac{z_t}{\sigma(X_t)} dw_t \right] + \bE\left[\frac{1}{T}\int_0^T \int_0^t \fL f(X_s) z_s ds dt \right].
\end{align}
\end{theorem}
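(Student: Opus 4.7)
The plan is to exploit the identity (\ref{eq:lie_bracket_delta_xf}) derived above, which was obtained by applying the Functional It\^o Formula to $\Delta_x f(X_t) z_t$ and using the differentiated pricing PPDE (\ref{eq:delta_x_PDE}):
$$d(\Delta_x f(X_t) z_t) = -\fL f(X_t) z_t\, dt + dm_t,$$
where $m$ is the local martingale in (\ref{eq:martingale}). Note that no assumption of vanishing Lie bracket is used in deriving this identity; Assumption \ref{assump:delta_path} enters only in Theorem \ref{thm:delta_0}.

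First I would integrate from $0$ to $t$ to obtain the integral form (\ref{eq:mart_non_zero_lie}), then integrate once more over $t \in [0,T]$, divide by $T$, and take expectations. Under Assumptions \ref{assump:delta_reg} and \ref{assump:sigma}, the integrand of $m$ is continuous in time, so $m$ is a local martingale; a standard localization argument (as in the proof of Theorem \ref{thm:delta_0}) allows us to assume $\bE[m_t] = 0$ for all $t \in [0,T]$. This yields
\begin{align*}
\Delta_x f(X_0) = \bE\!\left[\frac{1}{T}\int_0^T \Delta_x f(X_t) z_t\, dt\right] + \bE\!\left[\frac{1}{T}\int_0^T \!\!\int_0^t \fL f(X_s) z_s\, ds\, dt\right],
\end{align*}
which is equation (\ref{eq:delta_path}).

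Next, I would deal with the first expectation on the right-hand side exactly as in Theorem \ref{thm:delta_0}: rewrite the integrand as $\Delta_x f(X_t) \cdot \tfrac{1}{T}\tfrac{z_t}{\sigma^2(X_t)}\cdot \sigma^2(X_t)$ so that, in the notation of Section 2, this is the inner product $\langle \Delta_x f(X), \tfrac{1}{T}\tfrac{z}{\sigma^2(X)}\rangle_{\cH^2_x}$. Since $f \in \cD_x$ by Assumption \ref{assump:delta_reg} and $f(X)$ is a martingale under the pricing measure, the integration by parts formula (\ref{eq:ibp}) and the It\^o isometry give
$$\bE\!\left[\frac{1}{T}\int_0^T \Delta_x f(X_t) z_t\, dt\right] = \bE\!\left[g(X_T)\,\frac{1}{T}\int_0^T \frac{z_t}{\sigma(X_t)}\, dw_t\right],$$
yielding equation (\ref{eq:delta_path_first}). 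Substituting this into the previous display gives formula (\ref{eq:delta_path_2}).

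The main obstacle is purely technical: verifying that $\tfrac{1}{T}\tfrac{z}{\sigma^2(X)} \in \cH^2_x$ and that all the integrability conditions needed to apply Fubini (to interchange $\bE$ with the outer $dt$-integral) and to turn the local martingale $m$ into a genuine martingale hold under Assumptions \ref{assump:delta_reg} and \ref{assump:sigma}. This is handled by the same localization and $\Lambda$-continuity plus ellipticity arguments invoked in the proof of Theorem \ref{thm:delta_0}; beyond these measure-theoretic bookkeeping steps, the computation is essentially the one performed in the discussion preceding the statement.
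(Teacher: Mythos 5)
Your proposal follows exactly the paper's own argument: integrate the identity (\ref{eq:lie_bracket_delta_xf}) to get (\ref{eq:mart_non_zero_lie}), average over $t\in[0,T]$ and take expectations (killing $m$ by localization) to obtain (\ref{eq:delta_path}), and then treat the first term via the $\cH^2_x$ inner product and the integration by parts formula (\ref{eq:ibp}) exactly as in Theorem \ref{thm:delta_0}, yielding (\ref{eq:delta_path_first}). This is correct and matches the paper's proof; your added remarks on the integrability bookkeeping and on the fact that Assumption \ref{assump:delta_path} is not needed are consistent with the statement.
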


Since the formula above makes reference to $f$ and its Lie bracket, it is not as computationally appealing as the formula derived for locally weakly path-dependent functionals, see Theorem \ref{thm:delta_0}. To achieve better results computational-wise, for the second term of the right-hand side of (\ref{eq:delta_path_2}), future research should focus on the adjoint and/or an integration by parts for $\Delta_t$ and $\Delta_x$ in $\cH_x^2$. An integration by parts formula for $\Delta_x$ in $\cH_x^2$ is presented in \cite[Section 3]{fito_dupire}.

In any event, an important interpretation of the second term of the right-hand side of Equation (\ref{eq:delta_path_2}) is as a \textit{path-dependent correction} to the locally weakly path-dependent ``Delta" of Equation (\ref{eq:delta_path_first}), which does not take into consideration the strong path-dependence structure of the derivative contract. This is one of the most important achievements of the functional It\^o calculus framework: it allows us to quantify how the path-dependence of the functional influences the Delta of this contract. We would like to call attention to the fact that this was not achieved within the Malliavin calculus framework.

In the next sections we provide formulas for the Gamma and the Vega of locally weakly path-dependent derivative contracts. Similar formulas and proofs for the different classifications of path-dependence of Definition \ref{def:classif} can be derived following akin arguments.

\subsubsection{Gamma}

The Gamma of a derivative is the sensitivity of its Delta with respect to the current value of the underlying asset, i.e. $\Delta_{xx} f(X_t)$. Here we will derive a similar formula to (\ref{eq:delta_s}) for the Gamma.


\begin{assumptions}\label{assump:sigma_gamma}
$\Delta_t \sigma = \Delta_{tx} \sigma = 0$ in $\Lambda$.


\end{assumptions}

Notice that Assumption \ref{assump:sigma_gamma} is satisfied for time-homogenous local volatility models, see Equation (\ref{eq:local_vol_homog}).

\begin{theorem}\label{thm:gamma}
Under Assumptions \ref{assump:delta_reg} and \ref{assump:delta_path} for $f$ and $\Delta_x f$ and additionally assuming that $\sigma$ satisfies Assumptions \ref{assump:sigma_gamma}, we find
$$\Delta_{xx} f(X_s) = \bE[ g(X_T) \xi_{s,T} \ | \ X_s],$$
where
\begin{align}
&\eta_s = \int_0^s \frac{z_t}{\sigma(X_t)} dw_t, \\
&\xi_{s,T} = \frac{(\eta_T - \eta_s)^2}{(T-s)^2z_s^2} - \frac{\Delta_x \sigma(X_s)}{\sigma(X_s)} \frac{\eta_T - \eta_s}{(T - s) z_s}  - \frac{1}{(T - s)\sigma^2(X_s)}. \label{eq:xi_sT}
\end{align}
\end{theorem}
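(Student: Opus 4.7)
The plan is to iterate the argument of Corollary~\ref{cor:delta_s} by applying it to the functional $M(Y_s) := \Delta_x f(Y_s)\, z(Y_s)$, which will play the role of a price functional in its own right. From the computation in the proof of Theorem~\ref{thm:delta_0}, the hypothesis $\fL f = 0$ forces $M(X_t)$ to be a $\bP$-martingale, so by the Functional It\^o Formula $M$ satisfies the pricing PPDE $\Delta_t M + \tfrac12 \sigma^2 \Delta_{xx} M = 0$. I would then verify that $\fL M = 0$ using the chain rule, Proposition~\ref{prop:lie_bracket_chain_rule}, to write $\fL M = z\, \fL \Delta_x f + \Delta_x f\, \fL z$: the first term vanishes by hypothesis, and the second vanishes by a direct check, since Remark~\ref{rmk:functional_z} gives $\Delta_t z = 0$ and Assumptions~\ref{assump:sigma_gamma} ($\Delta_t \sigma = \Delta_{tx}\sigma = 0$) force $\Delta_t(\Delta_x \sigma / \sigma) = 0$, hence $\fL z = \Delta_{xt} z - \Delta_{tx} z = 0$.

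Applying Corollary~\ref{cor:delta_s} to $M$ and computing $\Delta_x M(Y_s) = z(Y_s)\big[\Delta_{xx} f(Y_s) + \tfrac{\Delta_x \sigma(Y_s)}{\sigma(Y_s)} \Delta_x f(Y_s)\big]$ yields
\[
(T-s)\, z_s^2 \Big[\Delta_{xx} f(X_s) + \tfrac{\Delta_x \sigma(X_s)}{\sigma(X_s)} \Delta_x f(X_s)\Big] \;=\; \bE\big[\Delta_x f(X_T)\, z_T\, (\eta_T - \eta_s) \,\big|\, X_s\big].
\]
To convert the right-hand side to an expression involving $g(X_T)$, I would apply the Functional It\^o Formula to $f(X_t)(\eta_t - \eta_s)^2$ on $[s,T]$, whose drift equals $f(X_t)\, z_t^2/\sigma^2(X_t) + 2(\eta_t - \eta_s)\, z_t\, \Delta_x f(X_t)$. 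Taking $\bE[\,\cdot\mid X_s]$ and using Fubini, the cross-term integrates explicitly via the martingality of $N_t := z_t^2\big[\Delta_{xx} f(X_t) + \tfrac{\Delta_x \sigma(X_t)}{\sigma(X_t)} \Delta_x f(X_t)\big]$, which follows by the preparatory step applied once more to $\Delta_x f$; the $f z^2/\sigma^2$ term is handled via the identity $d(z_t^2/\sigma^2(X_t)) = -z_t^2\, \Delta_{xx}\sigma(X_t)/\sigma(X_t)\, dt$ that falls out of Assumptions~\ref{assump:sigma_gamma}. Combining these computations with Corollary~\ref{cor:delta_s} applied to $\bE[g(X_T)(\eta_T - \eta_s) \mid X_s]$ then assembles the three terms of $\xi_{s,T}$.

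The main obstacle is the last step: the bounded-variation process $z_t^2/\sigma^2(X_t)$ has nonzero drift in general (Assumptions~\ref{assump:sigma_gamma} do not force $\Delta_{xx}\sigma = 0$), so careful bookkeeping is needed to check that the corrections this drift contributes cancel against the other terms, so that the final formula reduces to the stated $\xi_{s,T}$, which involves only $\sigma$ and $\Delta_x \sigma$, not $\Delta_{xx}\sigma$.
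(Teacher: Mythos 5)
Your overall strategy---bootstrap Corollary \ref{cor:delta_s} by applying it a second time to a functional whose space derivative produces $\Delta_{xx} f$---is the same idea the paper uses, but you apply it to a different functional, and that choice is what creates the problem. The paper applies Corollary \ref{cor:delta_s} to the auxiliary \emph{price} functional $\tilde f(Y_s) = \bE[g(X_T)\,\eta(X_T)\mid Y_s] = (T-s)z(Y_s)\Delta_x f(Y_s) + f(Y_s)\eta(Y_s)$, i.e.\ to the price of the modified contract $\tilde g = g\cdot \eta_T$; it verifies $\fL\tilde f = 0$ by differentiating this product formula, and a single further application of Corollary \ref{cor:delta_s} then delivers $\bE[g(X_T)(\eta_T-\eta_s)^2\mid X_s]$ directly, with no reference to $\Delta_x g$ and no second It\^o computation. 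You instead apply Corollary \ref{cor:delta_s} to $M = z\,\Delta_x f$, whose terminal value is $\Delta_x g(X_T)\, z_T$; this implicitly requires the payoff to be space-differentiable (not assumed, and undesirable for contracts like calls), and it forces you into the auxiliary computation with $f(X_t)(\eta_t-\eta_s)^2$ in order to trade $\Delta_x g$ back for $g$.

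The genuine gap is in that last step, exactly where you flag the risk: the hoped-for cancellation does not occur. Carrying out your bookkeeping, the drift of $f(X_t)(\eta_t-\eta_s)^2$ contributes $\int_s^T \bE[f(X_t)\, z_t^2/\sigma^2(X_t)\mid X_s]\,dt$, and since $d\bigl(z_t^2/\sigma^2(X_t)\bigr) = -z_t^2\,\Delta_{xx}\sigma(X_t)/\sigma(X_t)\,dt$, this equals $(T-s) f(X_s) z_s^2/\sigma^2(X_s)$ \emph{minus} $\int_s^T\!\!\int_s^t \bE[f(X_u)\, z_u^2\, \Delta_{xx}\sigma(X_u)/\sigma(X_u)\mid X_s]\,du\,dt$. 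The first piece is precisely what produces the third term of $\xi_{s,T}$; the double integral is left over, and nothing else in your scheme can absorb it: the cross term integrates exactly to $(T-s)^2 N_s$ via the martingality of $N$, and the remaining ingredients $\Delta_x f(X_s)$ and $f(X_s)$ are already matched one-for-one by the first two terms of $\xi_{s,T}$. So your route establishes the stated formula only when $\Delta_{xx}\sigma \equiv 0$ (Black--Scholes, Bachelier), not in general. Note that the paper's hypotheses do not even require $\Delta_{xx}\sigma$ to exist (Assumptions \ref{assump:sigma} only demand $\sigma\in\bC^{1,0}$), which is a strong signal that a correct argument must never see it; the paper's proof avoids it because in the functional calculus $\Delta_x z(Y_s) = \frac{\Delta_x\sigma(Y_{s-})}{\sigma(Y_{s-})}\, z(Y_{s-})$ depends only on $Y_{s-}$ and is therefore annihilated by a further $\Delta_x$, so no second space derivative of $\sigma$ ever enters $\Delta_x\tilde f$ or $\fL \tilde f$. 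To repair your argument you would need to replace the pathwise It\^o analysis of $z_t^2/\sigma^2(X_t)$ by the functional-derivative analysis of $f\,\Delta_x\eta$, which is essentially to reproduce the paper's proof.
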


\begin{proof}

Firstly, there exist functionals $z$ and $\eta$ such that $z(X_t) = z_t$ and $\eta(X_t) = \eta_t$ a.s. By the functional derivatives formulas shown in Appendix \ref{sec:stoch_int_quad_var}, we have derived the functional derivatives of $z$ in Equation (\ref{eq:func_derivatives_of_z}). Additionally, by the same arguments one can easily conclude that
\begin{align*}
\Delta_x \eta(Y_t) = \frac{z(Y_{t-})}{\sigma^2(Y_{t-})}, \ \Delta_{xx} \eta(Y_t) = 0 \mbox{ and } \Delta_t \eta(Y_t) = 0.
\end{align*}

Remember now the following formula given in Corollary \ref{cor:delta_s}:
$$(T - s) z(Y_s) \Delta_x f(Y_s) + f(Y_s) \eta(Y_s) = \bE\left[\left. g(X_T) \eta(X_T) \right| Y_s \right].$$
Define then $\tilde{g}(Y_T) = g(Y_T) \eta(Y_T)$ and $\tilde{f}(Y_s) = \bE[\tilde{g}(X_T) \ | \ Y_s]$. Hence,
$$\tilde{f}(Y_s) = (T - s) z(Y_s) \Delta_x f(Y_s) + f(Y_s) \eta(Y_s).$$
It is easy to see that $\tilde{f}$ satisfies Assumptions \ref{assump:delta_reg}, since $\Delta_x f$ and $f$ satisfy this assumption themselves. Now, in order to apply the same argument as in the proof of the Theorem \ref{thm:delta_0}, it is necessary to prove $\fL \tilde{f} = 0$:
\begin{align}
&\Delta_x \tilde{f}(Y_s) = (T - s) \frac{\Delta_x \sigma(Y_{s-})}{\sigma(Y_{s-})}z(Y_{s-}) \Delta_x f(Y_s) + (T - s)z(Y_s) \Delta_{xx} f(Y_s) \label{eq:delta_x_f_tilde}\\
& \hskip 1.5cm + \Delta_x f(Y_s) \eta(Y_s) +  f(Y_s) \frac{z(Y_{s-})}{\sigma^2(Y_{s-})}, \nonumber\\
&\Delta_t \tilde{f}(Y_s) = -z(Y_s) \Delta_x f(Y_s) + (T - s) z(Y_s) \Delta_{tx} f(Y_s) + \Delta_t f(Y_s) \eta(Y_s).\label{eq:delta_t_f_tilde}
\end{align}
Let us now compute the mixed derivatives. For this, we have to assume that $y_{s-} = y_s$, which implies $Y_{s-} = Y_s$, see Equation (\ref{eq:path_x_t-}). In particular, the following computations work when $Y_s$ is continuous.
\begin{align*}
&\Delta_{tx} \tilde{f}(Y_s) = - \frac{\Delta_x \sigma(Y_s)}{\sigma(Y_s)}z(Y_s) \Delta_x f(Y_s) + (T - s) \frac{\Delta_{tx} \sigma(Y_s)}{\sigma(Y_s)}z(Y_s) \Delta_{x} f(Y_s) \\
& \hskip 1.5cm - (T - s) \frac{\Delta_{x} \sigma(Y_s)}{\sigma^2(Y_s)}\Delta_t \sigma(Y_s) z(Y_s) \Delta_{x} f(Y_s)\\
& \hskip 1.5cm + (T - s) \frac{\Delta_x \sigma(Y_s)}{\sigma(Y_s)}z(Y_s) \Delta_{tx} f(Y_s) - z(Y_s) \Delta_{xx} f(Y_s) \\
& \hskip 1.5cm + (T - s)z(Y_s) \Delta_{txx} f(Y_s) + \Delta_{tx} f(Y_s) \eta(Y_s) + \Delta_t f(Y_s) \frac{z(Y_s)}{\sigma^2(Y_s)}, \\
& \hskip 1.5cm - 2 f(Y_s)z(Y_s) \frac{\Delta_t \sigma(Y_t)}{\sigma^3(Y_t)}, \\ \\
&\Delta_{xt} \tilde{f}(Y_s) = - \frac{\Delta_x \sigma(Y_s)}{\sigma(Y_s)}z(Y_s) \Delta_x f(Y_s)  - z(Y_s) \Delta_{xx} f(Y_s) + (T - s) \frac{\Delta_x \sigma(Y_s)}{\sigma(Y_s)}z(Y_s) \Delta_{tx} f(Y_s) \\
& \hskip 1.5cm + (T - s)z(Y_s) \Delta_{xtx} f(Y_s) + \Delta_{xt} f(Y_s) \eta(Y_s) + \Delta_t f(Y_s) \frac{z(Y_s)}{\sigma^2(Y_s)}.
\end{align*}
Finally, since $\fL f(Y_s) = 0 = \fL (\Delta_x f)(Y_s)$, for continuous paths $Y_s$, and Assumption \ref{assump:sigma_gamma} is true, we find $\fL \tilde{f}(Y_s) = 0$, for continuous paths $Y_s$. Hence, $\tilde{f}$ satisfies Assumptions \ref{assump:delta_reg} and \ref{assump:delta_path}, and then, by Theorem \ref{thm:delta_0}, $(\Delta_x \tilde{f}(X_s) z_s)_{s \in [0,T]}$ is a martingale. Therefore
$$(T - s) z_s \Delta_x \tilde{f}(X_s) + \tilde{f}(X_s) \int_0^s \frac{z_t}{\sigma(X_t)} dw_t = \bE\left[\left. \tilde{g}(X_T) \int_0^T \frac{z_t}{\sigma(X_t)}dw_t \ \right| \ X_s \right].$$
By Equation (\ref{eq:delta_x_f_tilde}), we find
\begin{align*}
\Delta_x \tilde{f}(X_s) &= (T - s) z_s \frac{\Delta_x \sigma(X_s)}{\sigma(X_s)} \Delta_x f(X_s) + (T - s) z_s \Delta_{xx} f(X_s) \\
&+ \Delta_x f(X_s) \int_0^s \frac{z_t}{\sigma(X_t)} dw_t + f(X_s) \frac{z_s}{\sigma^2(X_s)}.
\end{align*}
Lastly, the result can be easily derived from the equation above.
\end{proof}

\begin{corollary}
At $s = 0$,
$$\Delta_{xx} f(X_0) = \bE[ g(X_T) \xi],$$
where
$$\xi = \xi_{0,T} = \pi^2 - \frac{\Delta_x \sigma(X_0)}{\sigma(X_0)} \pi  - \frac{1}{T\sigma^2(X_0)},$$
since $\pi = \eta_T/T$.
\end{corollary}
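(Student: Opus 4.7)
The plan is to specialize Theorem \ref{thm:gamma} to the initial time $s=0$ and simplify using the deterministic initial conditions of the processes involved. This is the reason the corollary is presented immediately after the theorem: all the analytic work (application of the Functional It\^o Formula, verification that the auxiliary functional $\tilde{f}$ is weakly path-dependent, and use of the integration by parts identity) has already been carried out, so only bookkeeping remains.

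Concretely, I would first recall that $z_0 = 1$ by definition of the tangent process in \eqref{eq:pd_tangent}, and that $\eta_0 = \int_0^0 z_t/\sigma(X_t)\, dw_t = 0$. Substituting $s=0$ into the expression for $\xi_{s,T}$ in \eqref{eq:xi_sT} then gives
\begin{align*}
\xi_{0,T} = \frac{(\eta_T - 0)^2}{T^2 \cdot 1} - \frac{\Delta_x \sigma(X_0)}{\sigma(X_0)} \cdot \frac{\eta_T - 0}{T \cdot 1} - \frac{1}{T \sigma^2(X_0)}.
\end{align*}
Identifying $\pi = \eta_T/T$ (which is the weight already introduced in Section \ref{sec:delta_weakly}) yields the announced form of $\xi$.

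Next I would address the conditional expectation: Theorem \ref{thm:gamma} states that $\Delta_{xx} f(X_s) = \bE[g(X_T)\xi_{s,T} \mid X_s]$. Since at $s=0$ the path $X_0$ is degenerate (a single deterministic point $x_0$), the conditioning is trivial in the sense of \eqref{eq:conditioned_expec}, and $\bE[\,\cdot\mid X_0]$ reduces to the unconditional $\bE[\,\cdot\,]$. This yields $\Delta_{xx} f(X_0) = \bE[g(X_T)\xi]$, as claimed.

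There is no serious obstacle here: the only thing to be careful about is that the hypotheses of Theorem \ref{thm:gamma} (Assumptions \ref{assump:delta_reg} and \ref{assump:delta_path} for both $f$ and $\Delta_x f$, plus Assumption \ref{assump:sigma_gamma} on $\sigma$) are inherited automatically, since they do not depend on the choice of $s$. The corollary is therefore essentially a change of notation, emphasizing the practical formula used for Monte Carlo estimation at the initial time.
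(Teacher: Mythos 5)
Your proposal is correct and matches the paper's (implicit) argument: the corollary is stated without a separate proof precisely because it is the specialization of Theorem \ref{thm:gamma} to $s=0$, using $z_0=1$, $\eta_0=0$, and the identification $\pi=\eta_T/T$, exactly as you describe. Nothing further is needed.
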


\begin{remark}
In the Black--Scholes model, we find the same result as in \cite{malliavin_greeks1}:
$$ \Delta_{xx}f(X_0)=\bE\left[g(X_T)\frac{1}{X_0^2 \sigma T}\left(\frac{w_T^2}{\sigma T}-w_T-\frac{1}{\sigma}\right)\right].$$
However, in \cite{malliavin_greeks1} the Gamma was derived only under the Black--Scholes model and for path-independent derivatives with contract of the form $g(X_T) = \phi(x_T)$.
\end{remark}

\subsubsection{Vega}\label{sec:vega}

In this section, we restrict ourselves to time-homogeneous local volatility models, i.e. $\sigma(Y_t) = \sigma(y_t)$. Consistently to \cite{fito_dupire}, we define the Vega of $f(X_t)$ as the Fr\'echet derivative of $f(X_t)$ with respect to $v = \sigma^2$. Using the result presented in \cite[Section 4, Example 1]{fito_dupire}, we know that the Vega of $f(X_t)$ in the direction of $u$ is given by
\begin{align}
\langle \nabla_v f, u \rangle = \lim_{\eps \to 0} \frac{\bE_{v_0 + \eps u}[g(X_T)] - \bE_{v_0}[g(X_T)]}{\eps} = \int_0^T \int_{\bR} u(t,x) m(t,x) dxdt, \label{eq:vega_def}
\end{align}
where
$$m(t,x) = \frac{1}{2}\bE_{v_0}\left[\Delta_{xx} f(X_t)  \ | \ x_t = x\right] p^{v_0}(t,x).$$
Here, $\bE_{v_0}$ is the expectation under the local volatility model (\ref{eq:local_vol_homog}) with $v_0 = \sigma^2$ and $p^{v_0}(t,x)$ is the density of $x_t$ under $v_0$.

\begin{theorem}\label{thm:vega}
Under the hypotheses of Theorem \ref{thm:gamma}, the Vega satisfies
$$\langle \nabla_v f, u \rangle = \bE_{v_0}\left[g(X_T) \ \frac{1}{2} \int_0^T  u(t,x_t) \xi_{t,T} dt \right].$$
where $\xi_{t,T}$ is given by Equation (\ref{eq:xi_sT}). Moreover,
\begin{align}
m(t,x) = \frac{1}{2}\bE_{v_0}\left[g(X_T) \xi_{t,T}  \ | \ x_t = x\right] p^{v_0}(t,x). \label{eq:vega_m}
\end{align}
\end{theorem}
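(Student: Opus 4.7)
The plan is to take the conclusion of Theorem~\ref{thm:gamma} as a black box and substitute it into the Vega representation~(\ref{eq:vega_def}), letting the tower property and Fubini do most of the work.

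First, I would recall that by~(\ref{eq:vega_def}) the Vega is determined by the kernel
$$m(t,x) = \tfrac{1}{2}\,\bE_{v_0}\!\left[\Delta_{xx} f(X_t) \ \middle| \ x_t = x\right] \phi^{v_0}(t,x).$$
The starting point of the proof is Theorem~\ref{thm:gamma}, which already gives a probabilistic representation of $\Delta_{xx} f(X_t)$ as a conditional expectation:
$$\Delta_{xx} f(X_t) = \bE\!\left[g(X_T)\,\xi_{t,T} \ \middle| \ X_t\right].$$
Thus the only real work is to move this inner expectation past the outer conditioning on the single random variable $x_t=x$.

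Next, I would apply the tower property. Since under the local volatility model~(\ref{eq:local_vol_homog}) the random variable $x_t$ is measurable with respect to the path $X_t$, conditioning on $X_t$ is finer than conditioning on $\{x_t=x\}$, so
\begin{align*}
\bE_{v_0}\!\left[\bE\!\left[g(X_T)\,\xi_{t,T} \ \middle| \ X_t\right] \ \middle| \ x_t=x\right]
= \bE_{v_0}\!\left[g(X_T)\,\xi_{t,T} \ \middle| \ x_t=x\right].
\end{align*}
Multiplying by $\tfrac{1}{2}\phi^{v_0}(t,x)$ yields the claimed formula~(\ref{eq:vega_m}) for $m(t,x)$.

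For the global Vega formula, I would then plug this expression for $m(t,x)$ back into~(\ref{eq:vega_def}) and use the standard identity
$$\int_{\bR} h(x)\,\bE_{v_0}[\Phi \ | \ x_t = x]\,\phi^{v_0}(t,x)\,dx = \bE_{v_0}\!\left[h(x_t)\,\Phi\right]$$
(applied with $h(x)=u(t,x)$ and $\Phi=g(X_T)\xi_{t,T}$) to get
$$\langle \nabla_v f, u\rangle = \tfrac{1}{2}\int_0^T \bE_{v_0}\!\left[u(t,x_t)\,g(X_T)\,\xi_{t,T}\right] dt,$$
and then swap the time integral with the expectation by Fubini. The main technical point to check — but essentially a routine one under the integrability assumed in Theorem~\ref{thm:gamma} — is that $g(X_T)\,\xi_{t,T}\,u(t,x_t)$ is integrable on $[0,T]\times\Omega$, which justifies both the tower-property step and the Fubini interchange. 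Once those are in place the conclusion follows immediately, so there is no genuinely hard step beyond Theorem~\ref{thm:gamma} itself.
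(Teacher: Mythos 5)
Your proposal is correct and follows essentially the same route as the paper: both substitute the representation $\Delta_{xx} f(X_t) = \bE_{v_0}[g(X_T)\xi_{t,T} \mid X_t]$ from Theorem \ref{thm:gamma} into the Vega formula (\ref{eq:vega_def}), use the tower property to pass from conditioning on $X_t$ to conditioning on $x_t$, and conclude by Fubini. The only difference is the order (you derive $m(t,x)$ first and then integrate, while the paper does the reverse), which is immaterial.
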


\begin{proof}
Equation (\ref{eq:vega_def}) can be rewritten as
$$\langle \nabla_v f, u \rangle = \frac{1}{2} \int_0^T \bE_{v_0}[u(t,x_t) \Delta_{xx} f(X_t)] dt.$$
Assuming the conditions of Theorem \ref{thm:gamma} are satisfied, then
$$\Delta_{xx} f(X_t) = \bE_{v_0}[ g(X_T) \xi_{t,T} \ | \  X_t],$$
and thus the following is true:
\begin{align}
\langle \nabla_v f, u \rangle &= \frac{1}{2} \int_0^T \bE_{v_0}[u(t,x_t) \bE_{v_0}[ g(X_T) \xi_{t,T} \ | \  X_t]] dt \nonumber \\
&= \frac{1}{2} \int_0^T \bE_{v_0}[ u(t,x_t) g(X_T) \xi_{t,T} ] dt \nonumber \\
&= \bE_{v_0}\left[g(X_T) \ \frac{1}{2} \int_0^T  u(t,x_t) \xi_{t,T} dt \right]. \label{eq:vega}
\end{align}
Notice now that
$$\bE_{v_0}\left[\Delta_{xx} f(X_t)  \ | \ x_t\right] = \bE_{v_0}\left[\bE_{v_0}[ g(X_T) \xi_{t,T} \ | \  X_t]  \ | \ x_t\right] = \bE_{v_0}\left[g(X_T) \xi_{t,T}  \ | \ x_t\right],$$
which implies (\ref{eq:vega_m}).
\end{proof}

\begin{remark}\label{rmk:bridges}
The results presented in the previous Theorem allows us to more efficiently compute the Vega of a path-dependent derivative in a local volatility model, namely we avoid the computation the functional second derivative of the price functional, $\Delta_{xx} f$.

The expectation in Equation (\ref{eq:vega_m}) should be understood as follows: the process starts at $(0,x_0)$ and it is simulated up to time $T$, but with the condition that $x_t = x$ (this is a spot value condition). In order to do this, one needs to simulate diffusion bridges, i.e. a diffusion under the condition that it starts at $(0,x_0)$ and pass at $(t,x_t)$. This type of conditional expectation does not appear in the case for Deltas and Gammas, since the conditional expectation involved in their computations are conditioned to the full path $X_t$.
\end{remark}

\begin{remark}
Comparing this result with the one presented in \cite{malliavin_greeks1}, we notice that Dupire's formula (\ref{eq:vega_def}) avoids the necessity to compute Skorohod integrals. Actually, one can show that the formula for the Vega in \cite{malliavin_greeks1} can be simplified to (\ref{eq:vega_def}) when $g(X_T) = \phi(x_T)$.
\end{remark}

\subsubsection{Numerical Example}\label{ex:quadratic_variation}

Volatility derivatives are financial contracts such that their underlying asset is a measurement of volatility or variance, such as the realized volatility over a pre-determined period or the Chicago Board Options Exchange Market Volatility Index (VIX).

In this example, we will consider the continuous-time version of options on realized variance, more precisely \textit{options on quadratic variation}, see for instance \cite{lee_quad_var_derivative}. This example was not dealt in the Malliavin calculus setting.

We will consider a payoff functional $g$ of the form $g(Y_T) = \phi(y_T, QV(Y_T))$, where $QV$ is the functional representing the pathwise quadratic variation of the price path, we refer the reader to Appendix \ref{sec:stoch_int_quad_var}. Particularly, we will examine a Call option with a variance European knock-out barrier, i.e. $\phi(y, QV) = (y - K)^+1_{\{QV < H\}}$. This derivative is called a VKO Call option; it is a commonly traded exotic derivative in the Foreign Exchange markets.

The price functional $f(Y_t) = \bE[g(X_T) \ | \ Y_t]$ is defined as in Equation (\ref{eq:conditioned_expec}). We start by observing that, under a local volatility model, an augmentation-of-variable argument shows that one can write $f(Y_t) = \psi(t, y_t, QV(Y_t))$. Following this characterization, one could prove the smoothness of the function $\psi$ (and hence of the functional $f$) using classical tools of PDE. Hence, $f$ satisfies Assumptions \ref{assump:delta_reg}.

To analyze the path-dependence of this derivative, we would like to derive the Lie bracket of $f$. Unfortunately, the time functional derivative of $\Delta_x QV$ does not exist in the whole $\Lambda$. Nonetheless, we are able to conclude that $\fL QV(Y_t) = 0$, for continuous paths $Y_t$, see Appendix \ref{sec:stoch_int_quad_var}. Hence, under a local volatility model, the same holds for $f$, by Proposition \ref{prop:lie_bracket_chain_rule}. Therefore, $f$ satisfies Assumptions \ref{assump:delta_path}. See Appendix \ref{sec:stoch_int_quad_var} for additional details.

To make this example computationally interesting for the calculation of the Delta and Gamma, we assume that $x$ follows a CEV (Constant Elasticity of Variance) model, i.e. $\sigma(x) = \max\{\sigma x^{\gamma}, \alpha\}$, where $\alpha=0.001$ is a lower bound in order to ensure that $\sigma$ is bounded from below, see Remark \ref{rmk:stroock_varadhan}.

In the specific case of the computation of the Vega, we will assume the Black--Scholes model (i.e. we take $\gamma = 1$). More complex local volatility models, like the CEV itself, could be considered. However, it would be computationally challenging to simulate its diffusion bridges (see Remark \ref{rmk:bridges}) and hence outside the scope of this paper. The order of magnitude of the quadratic variation of $x$ with $\gamma = 1$ and $\gamma = 0.5$ are different. We take that into account when computing the Vega.

Below, in Figure \ref{fig:mc_qv}, we show the convergence plots of $\Delta_x f(X_0)$ and $\Delta_{xx} f(X_0)$. These quantities are computed using Theorems \ref{thm:delta_0} and \ref{thm:gamma}. Moreover, in Figure \ref{fig:mc_qv_fd}, we compare the convergence plots using the weights given by Theorems \ref{thm:delta_0} and \ref{thm:gamma} with the standard Finite Difference estimation  (i.e. $\Delta_x f(X_0) = (f(X_0^{h}) - f(X_0^{-h})/(2h))$ and $\Delta_x f(X_0) = (f(X_0^{h}) -2f(X_0) + f(X_0^{-h})/h^2)$). Finally, we present the plot of the Vega of $f$ as defined in Section \ref{sec:vega}. More precisely, we plot $m(x,t)$ computed by Equation \ref{eq:vega_m}.

Considering the parameters given in Table \ref{tab:ParametersQV}, we show the results in Table \ref{tab:PriceDeltaQV} and in Figures \ref{fig:mc_qv}, \ref{fig:mc_qv_fd} and \ref{fig:vega_qv}.

The reader should notice that Finite Difference estimation for the Delta and Gamma perform very poorly. The reason is the discontinuity of the knock-out volatility barrier. This feature adds significant noise to the problem, as one can easily see in Figure \ref{fig:mc_qv_fd}.
\begin{table}[h!]
\begin{center}
\begin{tabular}{l c}
\hline Parameter & Value  \\
\hline
Initial Value $(X_0)$ & 100 \\
Volatility $(\sigma)$ & 0.2 \\
CEV Parameter $(\gamma)$ & 0.5/1.0 \\
Strike $(K)$ & 100 \\
Variance Barrier $(H)$ & 4.0 \\
Maturity $(T)$ & 1.0 \\
\hline
\end{tabular}
\caption{Parameters of the example on the VKO Call option.} \label{tab:ParametersQV}
\vspace{0.3cm}
\begin{tabular}{c c c}
\hline
\empty & Mean & Standard Error \\
\hline
$f(X_0)$ & 0.3624 & 0.0026\\
$\Delta_x f(X_0)$ & 0.2087 & 0.0022\\
$\Delta_{xx} f(X_0)$ & 0.0604 &  0.0018\\
$\Delta_x f(X_0)$ (FD) & 0.3683 & 0.1884\\
$\Delta_{xx} f(X_0)$ (FD) & -28.8123 &  37.5483\\
\hline
\end{tabular}
\caption{Monte Carlo Estimation of the Price, Delta and Gamma of the VKO Call option.} \label{tab:PriceDeltaQV}
\end{center}
\end{table}

\begin{figure}[h!]
\centering
\centerline{\includegraphics[scale = 0.5]{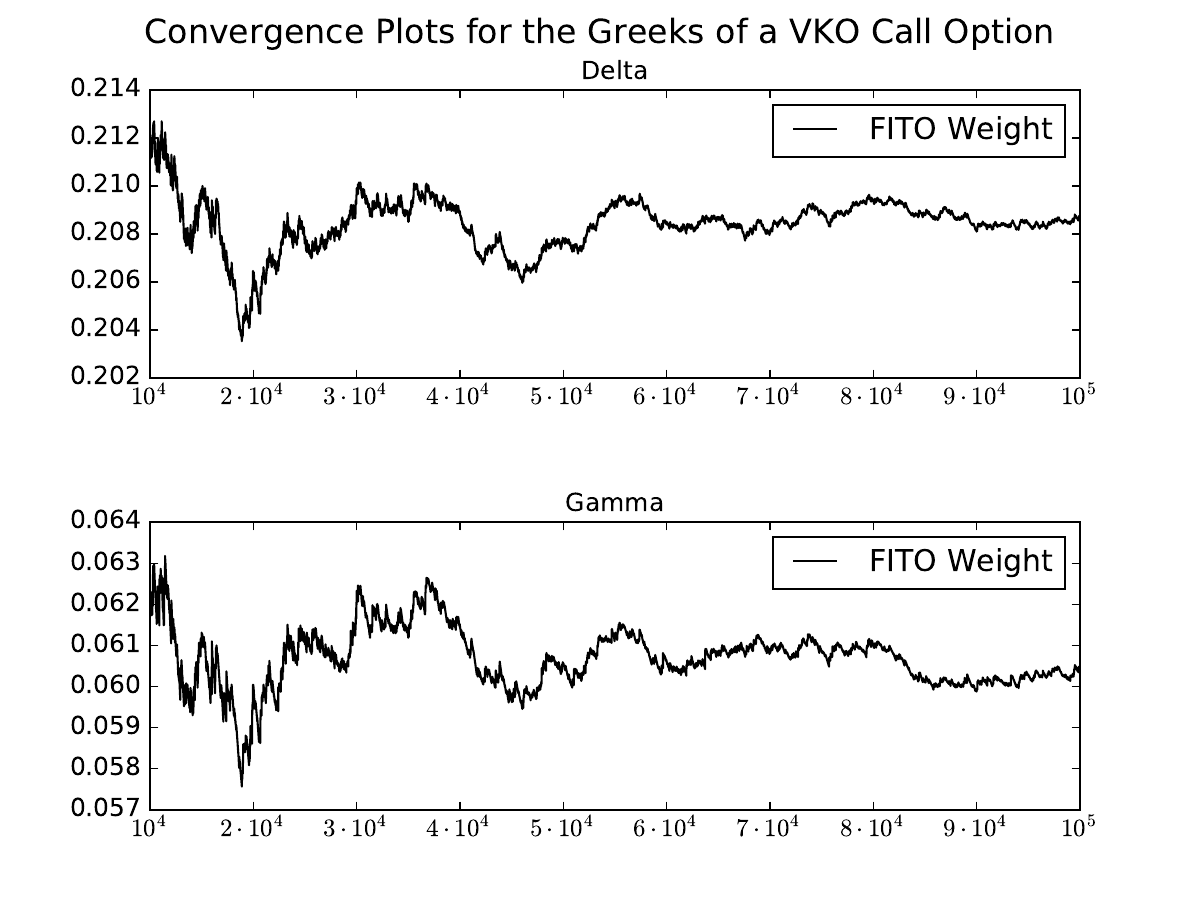}}
\renewcommand\figurename{Figure}
\caption{Convergence Plot of the Monte Carlo Method to Compute $\Delta_x f(X_0)$ and $\Delta_{xx} f(X_0)$.}
\label{fig:mc_qv}
\centerline{\includegraphics[scale = 0.5]{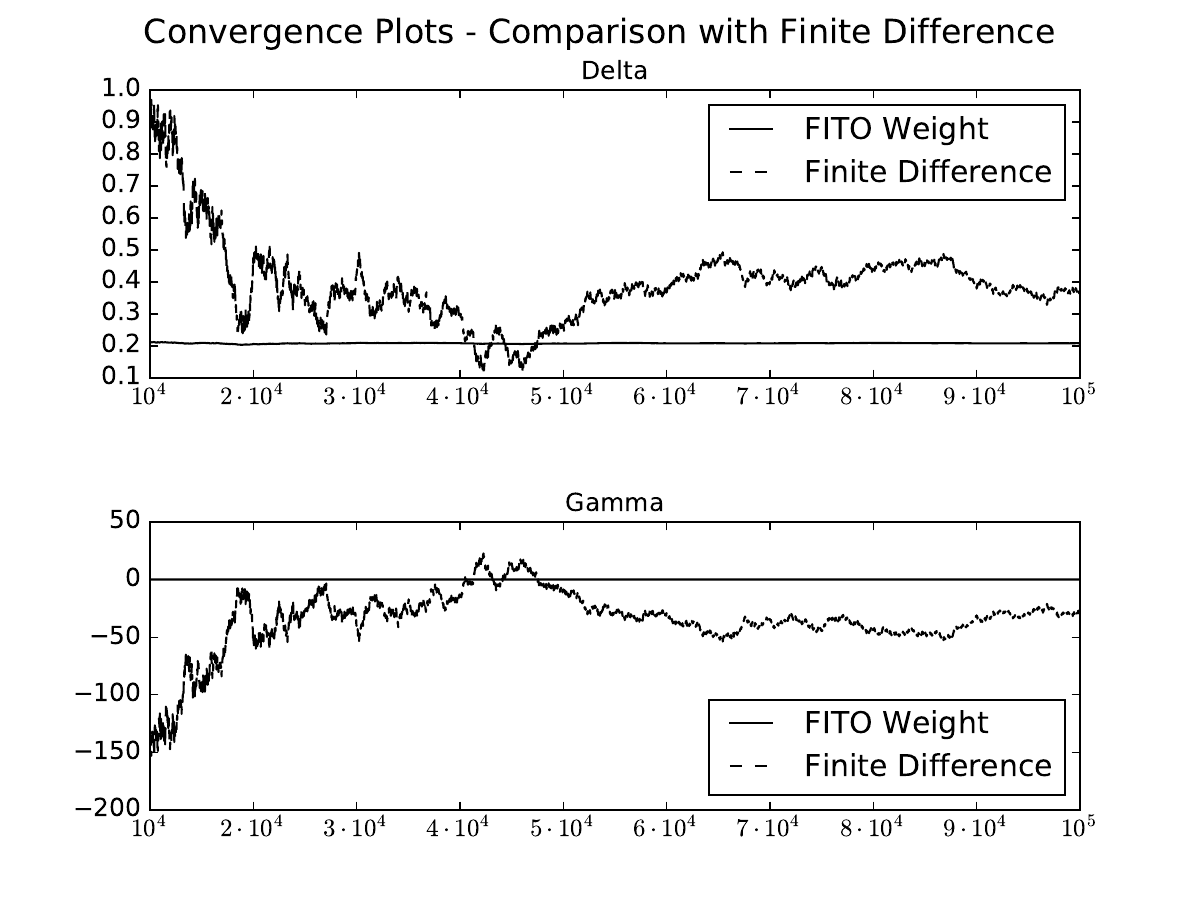}}
\vspace{-0.6cm}
\caption{Convergence Plot - Comparison with Finite Difference Method.}
\label{fig:mc_qv_fd}
\end{figure}

\begin{figure}[h!]
\centering
\centerline{\includegraphics[scale = 0.4]{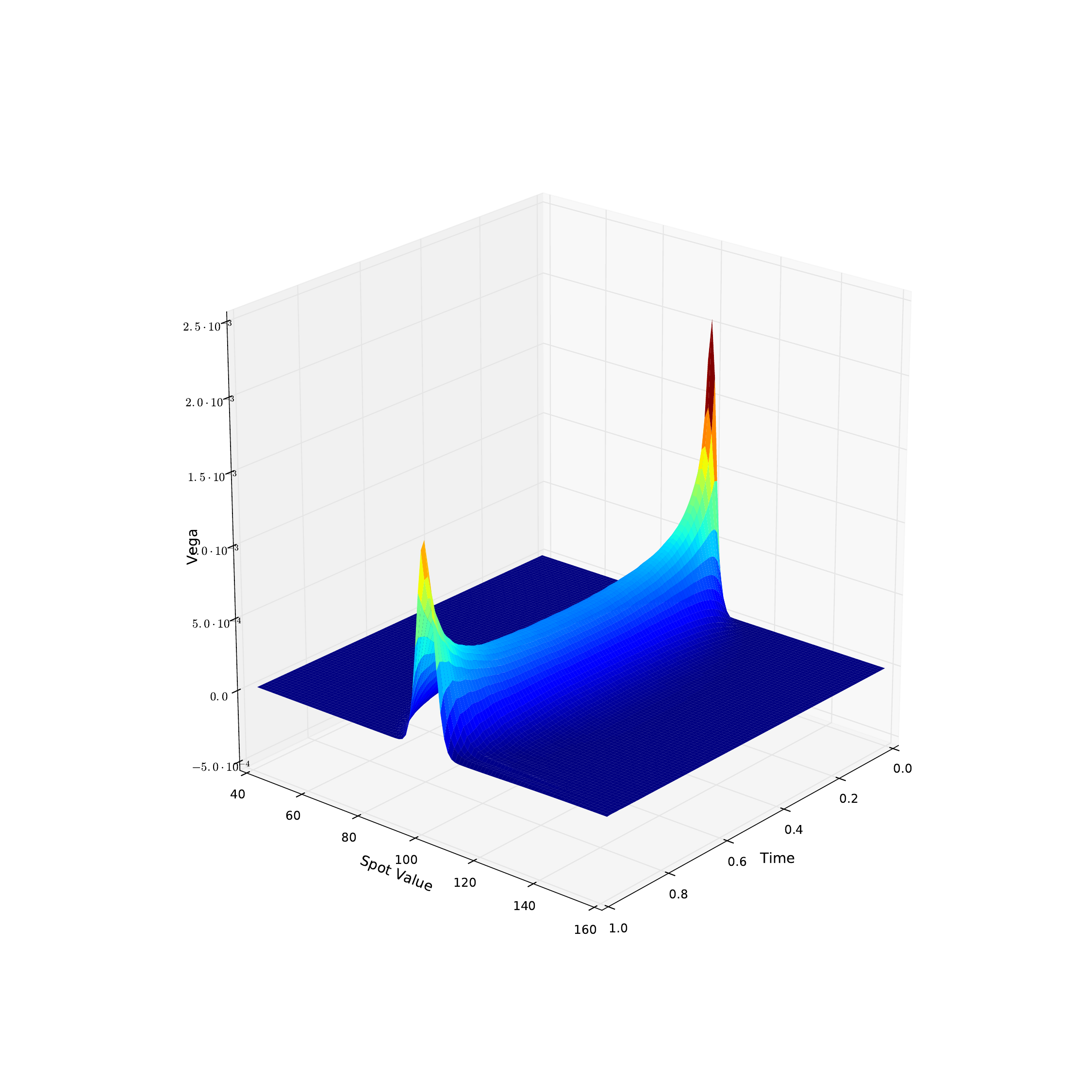}}
\vspace{-2cm}
\caption{Plot of $m(x,t)$ - the Vega for a VKO Call option. The axes Time and Spot Value mean $t$ and $x$, respectively.}
\label{fig:vega_qv}
\end{figure}

\clearpage

\subsection{More on Delta}

In the following, we will derive formulas for the Delta of a derivative contract distinguishing each path-dependence structure presented in Definition \ref{def:classif}.

The goal of this section is twofold: show how the result found in \cite{malliavin_greeks1} using Malliavin calculus can be achieved using functional It\^o calculus and then provide a better understanding of the assumption used in the Malliavin calculus framework that the contract $g$ is of the form:
\begin{align}\label{eq:contract_finite}
g(Y_T) = \phi(y_{t_1}, \ldots, y_{t_n}).
\end{align}
In short, this assumption implies that contracts of this form generate derivatives prices that are discretely monitored functionals, see Definition \ref{def:classif}. The main feature of these functionals is that they exhibit local weak path-dependence but for the finite set of times $\{t_1,\ldots,t_n\}$, see Proposition \ref{prop:lie_bracket_discret}.

\subsubsection{Discretely Monitored Functionals}

In this section, we consider a simple modification of the method described in Section \ref{sec:delta_weakly} to handle discretely monitored functionals as studied in \cite{malliavin_greeks1}, see Equation (\ref{eq:contract_finite}).

\begin{theorem}\label{thm:delta_discret}
Assume the no-arbitrage price of a path-dependent derivative, denoted by $f$, is a discretely monitored functional and that $f$ satisfies Assumptions \ref{assump:delta_reg}. Hence, we find the same representation for the Delta as in \cite{malliavin_greeks1}:
\begin{align}
\Delta_x f(X_0) = \bE\left[g(X_T) \int_0^T \frac{a(t)z_t}{\sigma(X_t)} dw_t\right],
\end{align}
for any $a \in \Gamma$, where
\begin{align}
\Gamma = \left\{ a \in L^2[0,T] \ ; \ \int_0^{t_i} a(t) dt = 1, \ \forall \ i = 1,\ldots,n \right\}. \label{eq:gamma_fournie}
\end{align}
\end{theorem}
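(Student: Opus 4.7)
The plan is to adapt the argument of Theorem \ref{thm:delta_0}, replacing the constant weight $1/T$ by a general $a\in\Gamma$ and exploiting the piecewise-martingale structure of $M_t := \Delta_x f(X_t)\, z_t$ that discretely monitored functionals enjoy. Set $t_0 := 0$ and, if $t_n < T$, $t_{n+1} := T$. By Proposition \ref{prop:lie_bracket_discret}, $\fL f(Y_s) = 0$ at every continuous path $Y_s$ with $s \notin \{t_1,\ldots,t_n\}$. Inserting this into Equation (\ref{eq:lie_bracket_delta_xf}) on each open interval $(t_{i-1}, t_i)$ yields $dM_t = dm_t$ there, so, after the same localization as in the proof of Theorem \ref{thm:delta_0}, $M$ is a true martingale on $[t_{i-1}, t_i)$. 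Consequently $\bE[M_t]$ is constant on each such interval; call this constant $c_{i-1}$. Clearly $c_0 = \Delta_x f(X_0)$, and whenever $t_n < T$ the identity $f(Y_t) = \phi(y_{t_1},\ldots,y_{t_n})$ for $t\in(t_n,T]$ forces $\Delta_x f \equiv 0$ there, giving $c_n = 0$.

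Now fix $a\in\Gamma$. Writing $\int_{t_{i-1}}^{t_i} a(t)\,dt = \int_0^{t_i} a - \int_0^{t_{i-1}} a$ and using $\int_0^{t_i} a = 1$ for $i = 1,\ldots,n$ (with $\int_0^{t_0} a = 0$), a telescoping calculation gives
\begin{align*}
\bE\!\left[\int_0^T a(t)\, M_t\,dt\right]
&= \sum_{i=1}^{n+1} c_{i-1}\int_{t_{i-1}}^{t_i}\! a(t)\,dt \\
&= c_0 + \sum_{i=2}^{n} c_{i-1}(1-1) + c_n\!\left(\textstyle\int_0^T a\,dt - 1\right) = \Delta_x f(X_0),
\end{align*}
where the last summand is absent if $t_n = T$ and vanishes otherwise because $c_n = 0$. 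Since $d\langle x\rangle_t = \sigma^2(X_t)\,dt$, this identity reads
$$\Delta_x f(X_0) = \left\langle \Delta_x f(X),\, \frac{a\,z}{\sigma^2(X)} \right\rangle_{\cH^2_x},$$
and the integration by parts formula (\ref{eq:ibp}) together with the It\^o isometry (\ref{eq:isometry}) then yield, exactly as in the last line of the proof of Theorem \ref{thm:delta_0},
$$\Delta_x f(X_0) = \left\langle f(X),\, \cI_x\!\left(\frac{a\,z}{\sigma^2(X)}\right)\right\rangle_{\cM^2_x} = \bE\!\left[g(X_T)\int_0^T \frac{a(t)\,z_t}{\sigma(X_t)}\,dw_t\right].$$

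The main obstacle is the first step: both $f$ and $\Delta_x f$ carry genuine discontinuities in their path-dependence structure at each monitoring time $t_i$ (even while $f$ itself remains continuous there), so the functional It\^o formula cannot be applied globally on $[0,T]$, only interval by interval. This piecewise application is enough precisely because the defining constraints of $\Gamma$ eliminate by telescoping the interior constants $c_1,\ldots,c_{n-1}$ that would otherwise need to be evaluated, which is the conceptual reason for the form of $\Gamma$ in the Malliavin-calculus derivation of \cite{malliavin_greeks1}.
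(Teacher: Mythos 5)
Your proof is correct and follows essentially the same route as the paper: both rest on Proposition \ref{prop:lie_bracket_discret} applied to Equation (\ref{eq:lie_bracket_delta_xf}) on the intervals between monitoring dates, followed by weighting with $a\in\Gamma$ and the integration by parts formula (\ref{eq:ibp}). The only difference is presentational --- the paper works out the two-date case with an explicit $\eps$-excision around $t_1$ and asserts the general case is similar, whereas you carry out the general $n$-date case directly via the piecewise-constancy of $\bE[\Delta_x f(X_t)z_t]$ and the telescoping enforced by the constraints defining $\Gamma$.
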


\begin{proof}

To focus on the essential arguments of the proof, we consider the case with only two monitoring dates $t_1 < T$. This setting allow us to introduce all the elements of the proof without the burden of heavy notations. A similar reasoning could be applied to the general case.

As we have seen in Equation (\ref{eq:lie_bracket_delta_xf}),
\begin{align}\label{eq:sde_proof_delta_discret}
d(\Delta_x f(X_t) z_t) = -\fL f(X_t)z_tdt + dm_t,
\end{align}
with $(m_t)_{t \in [0,T]}$ being a local martingale. By well-known localization arguments (see Appendix \ref{sec:proof_delta_0}), we assume that $x$ and $m$ are martingales and that $f \in \cD_x$. As seen in Proposition \ref{prop:lie_bracket_discret}, we have  $\fL f(X_t)=0$  for all $t \in [0,t_1) \cup (t_1,T]$. Since $\fL f$ does not exist at $t = t_1$, we are only able integrate Equation (\ref{eq:sde_proof_delta_discret}) over intervals not containing $t_1$. Fix $\eps > 0$ and for $t \in (t_1,T]$, we integrate the SDE (\ref{eq:sde_proof_delta_discret}) over the interval $[t_1+\eps,t]$, we get
$$\Delta_x f(X_t) z_t = \Delta_x f(X_{t_1+\eps}) z_{t_1+\eps}+ m_{t}-m_{t_1+\eps}.$$
So, multiplying by any $a \in \Gamma$ and integrating with respect to $t$ from $t_1+\eps$ to $T$, we have
\begin{align}
&\int_{t_1+\eps}^{T} \Delta_x f(X_t) z_t a(t) dt \nonumber\\
&= \int_{t_1+\eps}^{T} \Delta_x f(X_{t_1+\eps}) z_{t_1+\eps} a(t) dt  +  \int_{t_1+\eps}^{T}  (m_t-m_{t_1+\eps}) a(t) dt  \nonumber\\
&= \Delta_x f(X_{t_1+\eps}) z_{t_1+\eps} \int_{t_1+\eps}^{T}  a(t) dt + \int_{t_1+\eps}^{T}  (m_t-m_{t_1+\eps}) a(t) dt \label{eq:t1T}
\end{align}

For $t \in [0,t_1)$, integrating again Equation (\ref{eq:sde_proof_delta_discret}) now over the interval $[0,t]$, we get
$$ \Delta_x f(X_t) z_t = \Delta_x f(X_0) + m_{t}.$$
Multiplying by $a \in \Gamma$ and integrating with respect to $t$ from 0 to $t_1-\eps$ give us
\begin{align}
&\int_{0}^{t_1-\eps} \Delta_x f(X_t) z_t a(t) dt \nonumber\\
&= \int_{0}^{t_1-\eps} \Delta_x f(X_0) a(t) dt +  \int_{0}^{t_1-\eps}  m_t a(t) dt  \nonumber\\
&=  \Delta_x f(X_0) \int_{0}^{t_1-\eps} a(t) dt +\int_{0}^{t_1-\eps}  m_t a(t) dt \label{eq:0t1}
\end{align}

Summing the two Equations (\ref{eq:t1T}) and (\ref{eq:0t1}), taking the expectation and using the fact $m$ is a martingale, we find
\begin{align*}
&\bE\left[\left(\int_0^{t_1-\eps} + \int_{t_1+\eps}^T\right) \Delta_x f(X_t) z_t a(t) dt \right] \\
&= \Delta_x f(X_0) \int_{0}^{t_1-\eps} a(t) dt + \Delta_x f(X_{t_1+\eps}) z_{t_1+\eps} \int_{t_1+\eps}^{T}  a(t) dt \\
&+ \bE\left[\int_{0}^{t_1-\eps}  m_t a(t) dt\right] + \bE\left[\int_{t_1+\eps}^{T}  (m_t-m_{t_1+\eps}) a(t) dt \right] \\
&=  \Delta_x f(X_0) \int_{0}^{t_1-\eps} a(t) dt + \Delta_x f(X_{t_1+\eps}) z_{t_1+\eps} \int_{t_1+\eps}^{T}  a(t) dt.
\end{align*}
Therefore, the result follows letting $\eps \to 0^+$ and applying the integration by parts formula and using that $a \in \Gamma$, which means $\int_{0}^{t_1} a(t) dt =1$ and $\int_{t_1}^{T} a(t) dt =0$.
\end{proof}

\begin{remark}
Comparing with Equation (\ref{eq:pi}), we conclude that Theorem \ref{thm:delta_discret} gives the same weight as in \cite{malliavin_greeks1}.
\end{remark}

\begin{remark}
Consider a contract $g(Y_T) = \phi(y_{t_1}, \ldots, y_{t_n})$, where $0 < t_1 < \cdots < t_n \leq T$ are fixed times and $\phi: \bR^n \longrightarrow \bR$. In the case of local volatility models, the assumption that $f$ is a discretely monitored functional in the previous Theorem is automatically satisfied as one can simply deduce from
$$f(Y_t) = \bE[\phi(x_{t_1}, \ldots, x_{t_n}) \ | \ Y_t],$$
and from Definition \ref{def:classif}.
\end{remark}

We would like to conclude this section observing that we were able to derive, using the techniques of functional It\^o calculus, the same results of \cite{malliavin_greeks1}, in which Malliavin calculus was used. Furthermore, the method implemented here enlightens the assumption that the derivative price needs to be a discretely monitored functional to employ Theorem \ref{thm:delta_discret}. Indeed, the main feature of such functionals is that they are locally weakly path-dependent in the interval $(t_i, t_{i+1})$ allowing us to apply the integration by parts formula in each of these interval.

One should also notice that, making the proper adaptations, a similar result to Theorem \ref{thm:gamma} holds true for discretely monitored functionals, since their Deltas are also discretely monitored functionals. Moreover, we should note that we could derive the equivalent of formula (\ref{eq:vega}) for discretely monitored functionals as well.

\subsubsection{Delayed Path-Dependent Functionals}

The argument presented in the proof of Theorem \ref{thm:delta_discret} can be generalized to the delayed path-dependent functionals. The next proposition states precisely the result. Define
$$\Gamma_s = \left\{ a \in L^2([0,T]) \ ; \ \int_0^{s} a(t) dt = 1 \mbox{ and } a(t) = 0, \mbox{ for } t \geq s\right\}.$$
\begin{proposition}\label{prop:delayed}
Fix a $t_1$-delayed path-dependent functional $f$ satisfying Assumptions \ref{assump:delta_reg} and consider $a \in \Gamma_{t_1}$. Thus,
\begin{align}
\Delta_x f(X_0) = \bE\left[g(X_T) \int_0^{t_1} \frac{a(t) z_t}{\sigma(X_t)} dw_t \right].\label{eq:delta_discret}
\end{align}
\end{proposition}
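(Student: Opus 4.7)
The plan is to mimic the argument of Theorem \ref{thm:delta_discret}, but now exploit the fact that the entire interval $[0,t_1)$ is free of Lie-bracket contributions, so that a single integration by parts against a weight supported on $[0,t_1]$ suffices. The main ingredient is again equation (\ref{eq:lie_bracket_delta_xf}),
$$d(\Delta_x f(X_t)z_t) = -\fL f(X_t) z_t\, dt + dm_t,$$
with $(m_t)$ a local martingale that we may localize to a martingale.

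Since $f$ is $t_1$-delayed path-dependent, $\fL f(X_t) = 0$ for every $t \in [0,t_1)$, so on that interval the drift vanishes and integration over $[0,t]$ for $t \in [0,t_1)$ gives
$$\Delta_x f(X_t)z_t = \Delta_x f(X_0) + m_t.$$
Fix $a \in \Gamma_{t_1}$. Multiplying the identity by $a(t)$, integrating over $[0, t_1 - \varepsilon]$, taking expectation and using $\bE[m_t]=0$, we obtain
$$\bE\left[\int_0^{t_1-\varepsilon} \Delta_x f(X_t) z_t\, a(t)\, dt\right] = \Delta_x f(X_0) \int_0^{t_1-\varepsilon} a(t)\, dt.$$
Letting $\varepsilon \to 0^+$ (and using the $\Lambda$-continuity of $\Delta_x f$, the continuity of $z$, and $a\in L^2[0,T]$ to justify passage to the limit by dominated convergence), together with the normalization $\int_0^{t_1} a(t)\,dt = 1$, yields
$$\Delta_x f(X_0) = \bE\left[\int_0^{t_1} \Delta_x f(X_t) z_t\, a(t)\, dt\right] = \bE\left[\int_0^T \Delta_x f(X_t)\, \frac{a(t) z_t}{\sigma^2(X_t)}\, \sigma^2(X_t)\, dt\right],$$
where we extended the integration domain to $[0,T]$ thanks to the condition $a(t) = 0$ for $t\geq t_1$.

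The last expression is precisely the inner product $\left\langle \Delta_x f(X), \frac{a\, z}{\sigma^2(X)} \right\rangle_{\cH^2_x}$. Applying the integration by parts formula (\ref{eq:ibp}) (noting that $f(X)$ is a martingale and that $a z/\sigma^2(X) \in \cH^2_x$, which follows from the assumptions on $\sigma$ and the integrability of $z$), we conclude
$$\Delta_x f(X_0) = \left\langle f(X), \cI_x\!\left(\frac{a\, z}{\sigma^2(X)}\right)\right\rangle_{\cM^2_x} = \bE\left[g(X_T) \int_0^T \frac{a(t) z_t}{\sigma(X_t)}\, dw_t\right] = \bE\left[g(X_T) \int_0^{t_1} \frac{a(t) z_t}{\sigma(X_t)}\, dw_t\right],$$
where the last equality again uses that $a$ vanishes after $t_1$.

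The main obstacle is the passage $\varepsilon \to 0^+$: one must know that the boundary contribution at $t_1$ does not pollute the identity. This is where the delayed path-dependent hypothesis is crucial, as it kills the drift on the \emph{open} interval $[0,t_1)$ rather than only between monitoring dates as in Theorem \ref{thm:delta_discret}, and so we never need to handle an analogue of the $[t_1+\varepsilon,T]$ piece that appears in the discretely monitored case; the weight $a$, by construction, is supported on $[0,t_1]$ and the boundary term $\Delta_x f(X_{t_1-\varepsilon})z_{t_1-\varepsilon}\int_{t_1-\varepsilon}^{t_1}a(t)\,dt$ vanishes in the limit by continuity.
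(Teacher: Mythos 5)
Your proposal is correct and follows essentially the same route as the paper: both rest on the identity $d(\Delta_x f(X_t)z_t) = -\fL f(X_t)z_t\,dt + dm_t$, weight by $a \in \Gamma_{t_1}$, use the support of $a$ together with the vanishing of $\fL f$ on $[0,t_1)$ to kill the drift contribution, and finish with the integration by parts formula (\ref{eq:ibp}). The only cosmetic difference is that you truncate at $t_1-\varepsilon$ and pass to the limit, whereas the paper integrates the global identity (\ref{eq:mart_non_zero_lie}) over all of $[0,T]$ and observes directly that the double integral $\int_0^T a(t)\int_0^t \fL f(X_s)z_s\,ds\,dt$ vanishes.
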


\begin{proof}
As before, by Equation (\ref{eq:mart_non_zero_lie}),
$$m_t = \Delta_x f(X_t) z_t - \Delta_x f(X_0) + \int_0^t \fL f(X_s) z_s ds.$$
Multiplying by any $a \in \Gamma_{t_1}$, integrating with respect to $t$ and taking expectation, we find
\begin{align*}
\Delta_x f(X_0) &= \bE\left[\int_0^T a(t) \Delta_x f(X_t) z_t dt \right] + \bE\left[\int_0^T a(t) \int_0^t \fL f(X_s) z_s ds dt \right] \\
&= \bE\left[\int_0^{t_1} a(t) \Delta_x f(X_t) z_t dt \right].
\end{align*}
Therefore, a simple application of the integration by parts formula yields the result.
\end{proof}

\begin{remark}
In the case of delayed path-dependent derivative, we have found the weight
$$\pi = \int_0^{t_1} \frac{a(t) z_t}{\sigma(X_t)} dw_t.$$
One should compare this formula with (\ref{eq:pi}).
\end{remark}

\begin{remark}
In the case when the Lie bracket is zero in $[u,s] \subset [0,T]$, we can adapt the proof above to find a similar expression of (\ref{eq:delta_discret}) for the Delta at time $u$, $\Delta_x f(X_u)$.
\end{remark}

\begin{remark}
Clearly, a discretely monitored functional is also delayed path-dependent, but it could be computationally advantageous to consider $a \in \Gamma$ instead of $a \in \Gamma_{t_1}$.
\end{remark}

\begin{example}\label{ex:forward_start}

Consider the following contract
$$g(X_T) = \left( x_T -  \frac{1}{T-t_1}\int_{t_1}^T x_u du\right)^+,$$
where $0 < t_1 < T$. This derivative is called \textit{forward-start floating-strike Asian call option}, see \cite{gobet_revisiting_greeks} for more details. We assume $x$ follows the Black--Scholes model with $r=0$, $dx_t = \sigma x_t dw_t$, where $\sigma > 0$. Hence, one can easily deduce that, for $t < t_1$, $f(Y_t) = \bE[g(X_T) \ | \ Y_t]$ depends only of $y_t$. Therefore, $f$ is a $t_1$-delayed path-dependent functional.

Applying Proposition \ref{prop:delayed}, we find
\begin{align}
\Delta_x f(X_0) = \bE\left[g(X_T)\int_0^{t_1} \frac{a(t) z_t}{\sigma x_t} dw_t \right].\label{eq:delta_asian}
\end{align}
Consider then the weight
$$\pi = \int_0^{t_1} \frac{a(t) z_t}{\sigma x_t} dw_t,$$
and further notice that in this model the tangent process satisfies $z_t = x_t/x_0$. Hence,
$$\pi = \frac{1}{\sigma x_0} \int_0^{t_1} a(t) dw_t \sim N\left(0 , \frac{1}{\sigma^2 x_0^2} \int_0^{t_1} a^2(t) dt \right).$$
One can show that the choice $a \equiv 1/t_1$ attains minimum variance for $\pi$ over $\Gamma_{t_1}$. Then,
$$\pi = \frac{w_{t_1}}{t_1 \sigma x_0}.$$
Considering the parameters given in Table \ref{tab:Parameters}, we find the results presented in Table \ref{tab:PriceDelta} and in Figure \ref{fig:mc_fwd_asian}. We compare the estimation using the weight we have derived with the finite difference approach (i.e. $\Delta_x f(X_0) \approx (f(X_0^h) - f(X_0^{-h}))/(2h)$, for a small $h$). We see that we achieve smaller standard error (and fastest convergence) using Equation (\ref{eq:delta_asian}).
\begin{table}[h!]
\begin{center}
 \begin{tabular}{c c}
 \hline Parameter & Value  \\
 \hline
 $X_0$ & 100 \\
 $\sigma$ & 0.2 \\
 $t_1$ & 0.2 \\
 $T$ & 1 \\
 \hline
 \end{tabular}
\caption{Parameters of the example on forward-start floating-strike Asian call options.} \label{tab:Parameters}
 \begin{tabular}{c c c}
 \hline
 \empty & Mean & Standard Error  \\
 \hline
 $f(X_0)$ & 3.5329 &  0.0200 \\
 $\Delta_x f(X_0)$ & 0.03607 & 0.00258 \\
 $\Delta_x f(X_0)$ (FD) & 0.04055 & 0.01409 \\
 \hline
 \end{tabular}
 \caption{Monte Carlo Estimation of the Price and Delta of a forward-start floating-strike Asian call option.} \label{tab:PriceDelta}
 \end{center}
\end{table}
\begin{figure}[h!]
\begin{center}
\includegraphics[scale = 0.5]{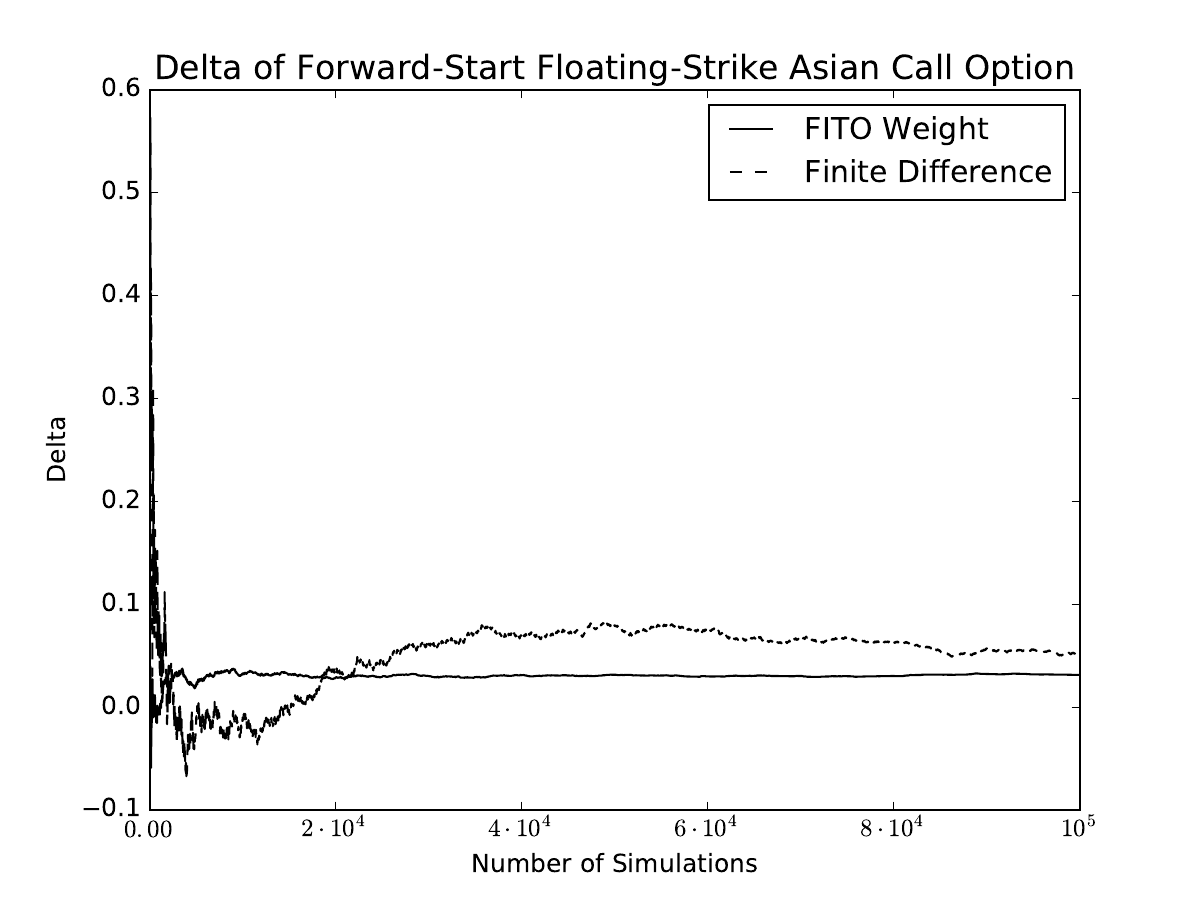}
\caption{Convergence Plot of the Monte Carlo Method to Compute $\Delta_x f(X_0)$ - Comparison with Finite Difference Method.}
\label{fig:mc_fwd_asian}
\end{center}
\end{figure}
\end{example}
\clearpage
\section{Conclusion and Future Research}

We have introduced an instantaneous measure of path-dependence using the functional It\^o calculus framework introduced in Dupire's influential work \cite{fito_dupire}. This measure is defined as the Lie bracket of the time and space functional derivatives. We then proposed a classification of functionals by their degree of path-dependence. Furthermore, for functionals with less severe path-dependence structures, called here locally weakly path-dependent, we studied the weighted-expectation formulas for the Delta, Gamma and Vega. In the case of a strong path-dependent functional, we were able to understand the impact of the Lie bracket on its Delta. Numerical examples of the theory were also presented.

Further research will be conducted to analyze the case of strong path-dependence. In particular, an explicit description of the adjoint of the time functional derivative $\Delta_t$. Moreover, another interesting direction for additional research is the case of non-linear prices of path-dependent derivatives and the computation of their Greeks; see for instance \cite{ma_zhang_01_aap, fito_zhang_1}

\subsubsection*{Acknowledgements}
\small
Firstly and more importantly, we would like to thank B. Dupire for proposing such interesting problem and all the enlightened discussions. Without him, this work would not be possible. We thank J.-P. Fouque and S. Corlay for all the insightful comments. We also thank the anonymous referees for their careful reading of the paper. Their comments improved the paper tremendously.

Part of this research was conducted while Y. F. Saporito was supported by Fulbright grant 15101796 and the CAPES Foundation, Ministry of Education of Brazil, Bras\'ilia, DF 70.040-020, Brazil.

The research was carried out in part during the summer internship of 2011 at Bloomberg supervised by B. Dupire.
\normalsize

\begin{appendices}

\section{Topological Support of SDEs with Functional Coefficients\protect\footnote{Compared to the published version: added the necessary assumption \textit{bounded from above} on the volatility and corrected the argument at the end of this appendix.}}\label{sec:topological_support}

As we have stated in Remark \ref{rmk:stroock_varadhan}, the Stroock-Varadhan Support Theorem guarantees that, under mild assumptions of the coefficients, the topological support of a diffusion will be the space of continuous spaces. However, to the best of our knowledge, this theorem is not available for the case of path-dependent coefficients. In what follows, we will discuss a simple assumption on the functional $\sigma$ that guarantees that the support of the process $x$ contains all the continuous paths.

Firstly, notice that we may remove the drift of SDE (\ref{eq:PD_vol}) by considering the discounted price, $e^{-rt}x_t$. In the case of the general SDE (\ref{eq:sde}), we may remove the drift $\mu$ by applying the Girsanov Theorem (assuming, for example, that $(\mu(X_t)/\sigma(X_t))_{t \in [0,T]}$ satisfies the Novikov's condition). Let us consider then the driftless SDE
$$dx_t = \sigma(X_t)dw_t.$$
The simple assumption on $\sigma$ that guarantees that the topological support of $x$ contains all the continuous functions in $[0,T]$ starting at $x_0$ is that $\sigma$ is \textit{bounded from above and from below}, i.e. $0 < a \leq \sigma(Y_t) \leq A < +\infty$, for all $Y_t \in \Lambda$. We wish to show that, given any $Z_T$ continuous and starting at $x_0$, for any $\delta > 0$,
$$\bP(\|X_T - Z_T\|_{\infty} < \delta) > 0.$$
By density arguments, it is sufficient to prove the inequality above for $Z_T$ of the form
$$z_t = x_0 + \int_0^t \phi_s ds,$$
for continuous $\Phi_T = (\phi_s)_{s \in [0,T]}$. Notice then
\begin{align*}
\|X_T - Z_T\|_{\infty} &= \sup_{t \in [0,T]} \left|\int_0^t \sigma(X_s)dw_s  - \int_0^t \phi_s ds\right| \\
&= \sup_{t \in [0,T]} \left|\int_0^t \sigma(X_s)d\left(w_s - \frac{\phi_s}{\sigma(X_s)} ds\right) \right|.
\end{align*}
Since $(\phi_t/\sigma(X_t))_{t \in [0,T]}$ is bounded ($\Phi_T$ is continuous and $\sigma$ is bounded from below), we may consider the equivalent probability measure $\bP^\phi$:
$$\left.\frac{d\bP^\phi}{d\bP}\right|_{\cF_t} = \exp\left\{\int_0^t \frac{\phi_s}{\sigma(X_s)} dw_s - \frac{1}{2} \int_0^t \frac{\phi_s^2}{\sigma^2(X_s)} ds \right\}.$$
Hence
\begin{align*}
&\bP\left(\sup_{t \in [0,T]} \left|\int_0^t \sigma(X_s)d\left(w_s - \frac{\phi_s}{\sigma(X_s)} ds\right) \right| < \delta \right) > 0 \Leftrightarrow \\
&\bP^\phi\left(\sup_{t \in [0,T]} \left|\int_0^t \sigma(X_s)d\left(w_s - \frac{\phi_s}{\sigma(X_s)} ds\right) \right| < \delta \right) > 0.
\end{align*}
Moreover, under $\bP^\phi$, the process $\left(w_t - \int_0^t \frac{\phi_s}{\sigma(X_s)} ds\right)_{t \in [0,T]}$ is a Brownian motion. Therefore, we have shown that
$$\bP(\|X_T - Z_T\|_{\infty} < \delta) > 0 \Leftrightarrow \bP\left(\sup_{t \in [0,T]} \left|\int_0^t \sigma(X_s)dw_s \right| < \delta \right) > 0$$
In order to show the right-hand side of the equivalence above, we would like to apply the Dambis-Dubins-Scharwz Theorem and to do this we will extend the process $x$ to the time domain $[0,+\infty)$ as follows:
$$x_t = x_T + a(w_t - w_T), \mbox{ for } t > T.$$
Notice that $(x_t)_{t \in [0, +\infty)}$ is a continuous local martingale with $\langle x \rangle_t$ almost surely strictly increasing and continuous in $t$ with $\langle x \rangle_\infty = +\infty$. Therefore, by the Dambis-Dubins-Scharwz Theorem, there exists a Brownian motion $(b_t)_{t \in [0,+\infty)}$ such that $x_t - x_0 = \int_0^t \sigma(X_s)dw_s = b_{\langle x \rangle_t}$. Thus, since $\langle x \rangle_T \leq A^2  T$,
$$\sup_{t \in [0,T]} \left|\int_0^t \sigma(X_s)dw_s \right| = \sup_{t \in [0,\langle x \rangle_T]} |b_t| \leq \sup_{t \in [0, A^2  T]} |b_t|,$$
which implies
$$\bP\left(\sup_{t \in [0,T]} \left|\int_0^t \sigma(X_s)dw_s \right| < \delta \right) \geq \bP\left(\sup_{t \in [0, A^2  T]} |b_t | < \delta\right) > 0,$$
as desired.

\section{Stochastic Integrals and Quadratic Variations}\label{sec:stoch_int_quad_var}

An important functional we would like to consider in the context of the functional It\^o calculus is the \textit{quadratic variation}. The first difficulty in this task is that this functional cannot be continuous with respect to the $d_{\Lambda}$ metric. In fact, for any $\eps > 0$, consider a process $(x_t)_{t \geq 0}$ starting at zero that is a Brownian motion in the strip $[-\eps,\eps]$ and reflects once it touches either barrier $-\eps$ or $\eps$. This process clearly satisfies $\| X_t \|_{\infty} \leq \eps$ and $\langle x \rangle_t = t$, for any $t \geq 0$, showing that the path $X_t$ is uniformly close to 0 with arbitrary quadratic variation.

Moreover, if we intuitively define $f(Y_t)$ as the quadratic variation of the path $Y_t$, we would face complications regarding the existence of this functional in $\Lambda$ and the choice of the sequence of partitions used to compute such quadratic variation. For instance, there exists a sequence of partitions that generates infinite quadratic variation for the Brownian motion.

Nonetheless, there are several ways to consider the quadratic variation functional. Here we will consider the framework of the Bichteler-Karandikar pathwise integral, see \cite{bichteler_stochastic_int} or \cite{karandikar_path_integral} for instance, where it is possible to consider a weaker continuity assumption on the functionals and extend the Functional It\^o Formula to this case. This was done in \cite{fito_extension_ito_formula} and we forward the reader there for the formal definitions and results below.

Consider the space of smooth functionals defined in the aforesaid reference, $\cC^{1,2}$.  This space extends $\bC^{1,2}$ by weakening the $\Lambda$-continuity assumption. For now, it is only necessary to know that $\bC^{1,2} \subset \cC^{1,2}$ and that the Functional It\^o Formula, Theorem \ref{thm:fif}, holds for functionals in $\cC^{1,2}$.

We now describe the Bichteler-Karandikar approach to define the pathwise stochastic integral. They proved there exists an operator $I: \Lambda_T \times \Lambda_T \longrightarrow \Lambda_T$ such that for any filtered probability space $(\Omega', \cF', \cF'_t, \bP')$, any semimartingale $x$ and any adapted, c\`adl\`ag process $z$, both in this probability space, satisfies
$$I(Z_T(\omega), X_T(\omega))(t) = \left(\int_0^t z_{s-} dx_s \right)(\omega) \quad \bP\mbox{-a.s.}$$


Now, fix a functional $h$ satisfying certain regularity requirements stated in \cite{fito_extension_ito_formula}. Then, there exists a functional
$$I_h : \Lambda \longrightarrow \bR$$
such that

\begin{enumerate}

\item $I_h \in \cC^{1,2}$;

\item $\ds I_h(X_t) = \int_0^t h(X_{s-}) dx_s$, for any continuous semimartingale $x$;

\item moreover, $\Delta_t I_h = 0$, $\Delta_x I_h(Y_t) = h(Y_{t-})$ and $\Delta_{xx} I_h = 0$.

\end{enumerate}

\noindent Here, the path $Y_{t-}$ is given by
\begin{align}
Y_{t-}(u) = \left\{
\begin{array}{ll}
  y_u,           &\mbox{ if } \quad u < t, \\
  y_{t-} = \lim_{u \to t^-} y_u, &\mbox{ if } \quad u = t.
\end{array}
\right. \label{eq:path_x_t-}
\end{align}

Furthermore, based on the well-known identity for semimartingales,
$$\langle x \rangle_t = x_t^2 - 2\int_0^t x_{s-} dx_s,$$
and since the pathwise definition of the stochastic integral is set, the \textit{pathwise quadratic variation} is defined by the identity
\begin{align}
QV(Y_t) = y_t^2 - 2 I_l(Y_t),
\end{align}
where the functional $l : \Lambda \longrightarrow \bR$ is given by $l(Y_t) = y_t$. From this, one can easily show

\begin{enumerate}

\item $QV \in \cC^{1,2}$;

\item $\ds QV(X_t) = \langle x \rangle_t$, for any continuous semimartingale $x$;

\item moreover, $\Delta_t QV = 0$, $\Delta_x QV(Y_t) = 2(y_t - y_{t-})$ and $\Delta_{xx} QV = 2$.

\end{enumerate}

We can then compute the Lie bracket of the stochastic integral and the quadratic variation functionals:
\begin{align*}
\fL I_h(Y_t) &= \left\{
\begin{array}{ll}
-\Delta_t h(Y_t) , &\mbox{ if } \Delta y_t = 0, \\ \\
\nexists, &\mbox{ if } \Delta y_t \neq 0,
\end{array}
\right.\\
\fL QV(Y_t) &= \left\{
\begin{array}{ll}
0 , &\mbox{ if } \Delta y_t = 0, \\ \\
\nexists, &\mbox{ if } \Delta y_t \neq 0,
\end{array}
\right.
\end{align*}
where $\Delta y_t = y_t - y_{t-}$ is the jump of $Y$ at time $t$.

Another functional we will be interested in is the pathwise version of the Dola\'eans-Dade exponential:
\begin{align}
E(Y_t) = \exp\left\{y_t -\frac{1}{2} QV(Y_t)\right\} \prod_{0 < s \leq t} (1 + \Delta y_s) \exp\left\{-\Delta y_s + \frac{1}{2}(\Delta y_s)^2\right\},
\end{align}
see \cite{protter05}. If $x$ is a continuous semimartingale, one can easily see that $E(X_t) = \exp\left\{x_t -\frac{1}{2} \langle x \rangle_t\right\}$. To compute the functional derivatives of $E$, notice that
\begin{align*}
E(Y_t) &= (1 + \Delta y_t) \exp\left\{-\Delta y_t + \frac{1}{2}(\Delta y_t)^2\right\} \\
&\exp\left\{y_t -\frac{1}{2} QV(Y_t)\right\} \prod_{0 < s < t} (1 + \Delta y_s) \exp\left\{-\Delta y_s + \frac{1}{2}(\Delta y_s)^2\right\}.
\end{align*}
Therefore, it is easy to conclude that
\begin{enumerate}

\item $\Delta_t E(Y_t)  = 0$;

\item $\ds \Delta_x E(Y_t) = \frac{1}{1 + \Delta y_t} E(Y_t)$ and $\ds \Delta_{xx} E(Y_t) = 0$.

\end{enumerate}

As we will see, we would like to compute the functional derivative of $f(Y_t) = E(I_h(Y)_t)$, where $I_h(Y)_t$ is the path $(I_h(Y_s))_{s \in [0,t]}$. Therefore, a chain rule argument allows us to write
\begin{align*}
&\Delta_t f(Y_t)  = 0, \ \Delta_x f(Y_t) = \frac{1}{1 + \Delta y_t} E(Y_t) h(Y_{t-}) \mbox{ and } \Delta_{xx} f(Y_t) = 0.
\end{align*}

\section{Localization Argument for Theorem \ref{thm:delta_0}}\label{sec:proof_delta_0}

By the assumptions on $f$ and $\sigma$, the process $m$, given in Equation (\ref{eq:martingale}), is a local martingale. Denote the integrand that defines $m$ by $(v_t)_{t \in [0,T]}$ and consider the sequence of stopping times
$$\tau_n = \inf\left\{t \in [0,T] \ ; \ \int_0^t (1 + (\Delta_x f)^2(X_s) + v_s^2) d\langle x \rangle_t \geq n\right\},$$
Then, $\tau_n \to T$ $\bP$-a.s, as $n \to +\infty$, and $(x_{t \wedge \tau_n})_{t \in [0,T]}$, $(m_{t \wedge \tau_n})_{t \in [0,T]}$, $(f(X_{t \wedge \tau_n}))_{t \in [0,T]}$ and
$$\int_0^{t \wedge \tau_n} \Delta_x f(X_s)dx_s$$
are proper martingales, for every $n \in \bN$. Moreover, $f \in \cD_n$, where $\cD_n = \cD_{x_{\cdot \wedge \tau_n}}$.

By the arguments shown in the proof of Theorem \ref{thm:delta_0}, where we have assumed that $f \in \cD_x$ and that $x$ and $m$ were proper martingales, we conclude that, for each $n \in \bN$,
$$\Delta_x f(X_{t \wedge \tau_n})z_{t \wedge \tau_n} = \Delta_x f(X_0) + m_{t \wedge \tau_n}.$$
Then $(\Delta_x f(X_t) z_t)_{t \in [0,T]}$ is clearly a local martingale and $(\tau_n)_{n \in \bN}$ is a localizing sequence for it. Now, integrating with respect to $t$, we get
$$\int_0^T \Delta_x f(X_{t \wedge \tau_n}) z_{t \wedge \tau_n} dt = \Delta_x f(X_0) T + \int_0^T m_{t \wedge \tau_n}  dt.$$


Hence, following the same steps performed in the proof of Theorem \ref{thm:delta_0}, we find
$$\Delta_x f(X_0) = \bE\left[f(X_{T \wedge \tau_n}) \frac{1}{T}  \int_0^{T \wedge \tau_n} \frac{z_t}{\sigma(X_t)} dw_t\right].$$
Notice now that $f(X_{T \wedge \tau_n}) = \bE[g(X_T) \ | \ \cF^x_{T \wedge \tau_n}]$, which implies
\begin{align*}
\Delta_x f(X_0) = \bE\left[g(X_T) \frac{1}{T}  \int_0^{T \wedge \tau_n} \frac{z_t}{\sigma(X_t)} dw_t\right].
\end{align*}
Deploying the Cauchy-Schwarz inequality, It\^o's Isometry and using the fact that $g(X_T) \in L^2$, we find
\begin{align*}
&\left|\Delta_x f(X_0) - \bE\left[g(X_T) \frac{1}{T}  \int_0^T \frac{z_t}{\sigma(X_t)} dw_t\right] \right| \leq \bE\left[\left|g(X_T) \frac{1}{T}  \int_{T \wedge \tau_n}^T \frac{z_t}{\sigma(X_t)} dw_t \right|\right]\\
&\leq \|g(X_T)\|_{L^2} \frac{1}{T} \left(\bE\left[ \int_{T \wedge \tau_n}^T \frac{z_t^2}{\sigma^2(X_t)} dt \right]\right)^{1/2} \stackrel{n \to +\infty}{\longrightarrow} 0,
\end{align*}
yielding the result.

\end{appendices}

\bibliographystyle{plain}

\begin{thebibliography}{99}

\bibitem{bichteler_stochastic_int}
K.~Bichteler.
\newblock Stochastic {I}ntegration and ${L}^p$ {T}heory of {S}emimartingales.
\newblock {\em Ann. Probab.}, 9(1):49--89, 1981.

\bibitem{rama_cont_fito_change_variable}
R.~Cont and D.-A. Fourni\'e.
\newblock {C}hange of {V}ariable {F}ormulas for {N}on-{A}nticipative
  {F}unctional on {P}ath {S}pace.
\newblock {\em J. Funct. Anal.}, 259(4):1043--1072, 2010.

\bibitem{rama_cont_fito_formula}
R.~Cont and D.-A. Fourni\'e.
\newblock A {F}unctional {E}xtension of the {I}t\^o {F}ormula.
\newblock {\em C. R. Math. Acad. Sci. Paris}, 348(1):57--61, 2010.

\bibitem{rama_cont_fito_mart}
R.~Cont and D.-A. Fourni\'e.
\newblock {F}unctional {I}t\^o {C}alculus and {S}tochastic {I}ntegral
  {R}epresentation of {M}artingales.
\newblock {\em Ann. Probab.}, 41(1):109--133, 2013.

\bibitem{dupire94}
B.~Dupire.
\newblock Pricing with a {S}mile.
\newblock {\em Risk Magazine}, 7:18--20, 1994.

\bibitem{fito_dupire}
B.~Dupire.
\newblock Functional {I}t\^o {C}alculus.
\newblock 2009.
\newblock Available at SSRN: \url{http://ssrn.com/abstract=1435551}.

\bibitem{fito_zhang_1}
I.~Ekren, C.~Keller, N.~Touzi, and J.~Zhang.
\newblock On {V}iscosity {S}olutions of {P}ath {D}ependent {PDE}s.
\newblock {\em Ann. Probab.}, 42(1):204--236, 2014.

\bibitem{fito_touzi_ppde1}
I.~Ekren, N.~Touzi, and J.~Zhang.
\newblock {V}iscosity {S}olutions of {F}ully {N}onlinear {P}arabolic {P}ath
  {D}ependent {PDE}s: {P}art {I}.
\newblock {\em Ann. Probab}, 44(2):1212--1253, 2016.

\bibitem{fito_touzi_ppde2}
I.~Ekren, N.~Touzi, and J.~Zhang.
\newblock {V}iscosity {S}olutions of {F}ully {N}onlinear {P}arabolic {P}ath
  {D}ependent {PDE}s: {P}art {II}.
\newblock {\em Ann. Probab}, 44(4):2507--2553, 2016.

\bibitem{foschi2008path}
P.~Foschi and A.~Pascucci.
\newblock Path {D}ependent {V}olatility.
\newblock {\em Decis. Econ. Finance}, 31(1):13--32, 2008.

\bibitem{fournie_cont_thesis}
D.-A. Fourni\'e.
\newblock {\em Functional {I}t\^o {C}alculus and {A}pplications}.
\newblock PhD thesis, Columbia University, 2010.

\bibitem{malliavin_greeks2}
E.~Fourni\'e, J.-M. Lasry, J.~Lebuchoux, and P.-L. Lions.
\newblock {A}pplications of {M}alliavin {C}alculus to {M}onte {C}arlo {M}ethods
  in {F}inance, {II}.
\newblock {\em Finance Stoch.}, 5(2):201--236, 2001.

\bibitem{malliavin_greeks1}
E.~Fourni\'e, J.-M. Lasry, J.~Lebuchoux, P.-L. Lions, and N.~Touzi.
\newblock {A}pplications of {M}alliavin {C}alculus to {M}onte {C}arlo {M}ethods
  in {F}inance.
\newblock {\em Finance Stoch.}, 3(4):391--412, 1999.

\bibitem{gobet_revisiting_greeks}
E.~Gobet.
\newblock {R}evisiting the {G}reeks for {E}uropean and {A}merican {O}ptions.
\newblock {\em Proceedings of the ``International Symposium on Stochastic
  Processes and Mathematical Finance" at Ritsumeikan University, Kusatsu,
  Japan. Edited by J. Akahori, S. Ogawa, S. Watanabe.}, pages 53--71, 2004.

\bibitem{gobet_malliavin_barrier_lookback}
E.~Gobet and A.~Kohatsu-Higa.
\newblock {C}omputations of {G}reeks for {B}arrier and {L}ookback {O}ptions
  {U}sing {M}alliavin {C}alculus.
\newblock {\em Electron. Commun. Probab.}, 8:51--62, 2003.

\bibitem{guyon_path_vol}
J.~Guyon.
\newblock Path {D}ependent {V}olatility.
\newblock {\em Risk Magazine}, Sep. 2014.

\bibitem{complete_sv_rogers}
D.~G. Hobson and L.C.G. Rogers.
\newblock Complete {M}odels with {S}tochastic {V}olatility.
\newblock {\em Math. Finance}, 8(1):27--48, 1998.

\bibitem{moore_interchanging_limit}
Z.~Kadelburg and M.~Marjanovi\'c.
\newblock Interchanging {T}wo {L}imits.
\newblock {\em The Teaching Of Mathematics}, VIII:15--29, 2005.

\bibitem{karandikar_path_integral}
R.~L. Karandikar.
\newblock On {P}athwise {S}tochastic {I}ntegration.
\newblock {\em Stochastic Process. Appl.}, 57(1):11--18, 1995.

\bibitem{lee_quad_var_derivative}
R.~Lee.
\newblock Realized volatility options.
\newblock In R.~Cont, editor, {\em Encyclopedia of Quantitative Finance}.
  Wiley, 2010.

\bibitem{ma_zhang_01_aap}
J.~Ma and J.~Zhang.
\newblock {R}epresentation {T}heorems for {B}ackward {S}tochastic
  {D}ifferential {E}quations.
\newblock {\em Ann. Appl. Probab.}, 12(4):1390--1418, 2002.

\bibitem{malltelm05}
P.~Malliavin and A.~Thalmaier.
\newblock {\em Stochastic {C}alculus of {V}ariations in {M}athematical
  {F}inance}.
\newblock Springer, 2005.

\bibitem{nualart_malliavin_book}
D.~Nualart.
\newblock {\em {M}alliavin {C}alculus and {R}elated {T}opics}.
\newblock Springer, second edition, 2006.

\bibitem{fito_extension_ito_formula}
H.~Oberhauser.
\newblock {A}n extension of the {F}unctional {I}t\^o {F}ormula under a {F}amily
  of {N}on-dominated {M}easures.
\newblock {\em Stoch. Dyn.}, 16(4), 2016.

\bibitem{fito_bsde_peng}
S.~Peng and F.~Wang.
\newblock {BSDE}, {P}ath-dependent {PDE} and {N}onlinear {F}eynman-{K}ac
  {F}ormula.
\newblock {\em Science China Mathematics}, 59(1):19--36, 2016.
\newblock Available at arXiv: \url{http://arxiv.org/abs/1108.4317}.

\bibitem{pinsky95}
R.~G. Pinsky.
\newblock {\em Positive {H}armonic {F}unctions and {D}iffusion}.
\newblock Cambridge University Press, 1995.

\bibitem{protter05}
P.~E. Protter.
\newblock {\em Stochastic {I}ntegration and {D}ifferential {E}quations}.
\newblock Springer, second edition, 2005.

\bibitem{rogerswilliams}
L.C.G. Rogers and D.~Williams.
\newblock {\em Diffuions, {M}arkov {P}rocesses and {M}artingales}.
\newblock Cambridge Mathematical Library, second edition, 2000.

\bibitem{wilmott_quant_fin}
Paul Wilmott.
\newblock {\em Paul {W}ilmott on {Q}uantitative {F}inance}.
\newblock Wiley, second edition, 2006.

\end{thebibliography}

\end{document}